\documentclass[11pt,reqno]{amsart}
\usepackage[T1]{fontenc}
\usepackage{lmodern}
\usepackage{amsmath,amssymb,mathtools,mathrsfs}
\usepackage{tikz}
\usepackage{float}
\usepackage{needspace}
\usepackage[margin=1in]{geometry}
\usepackage[expansion=false]{microtype}
\usepackage[colorlinks=true,linkcolor=blue,citecolor=blue,urlcolor=blue]{hyperref}
\allowdisplaybreaks[1]
\newtheorem{theorem}{Theorem}[section]
\newtheorem{proposition}[theorem]{Proposition}
\newtheorem{lemma}[theorem]{Lemma}
\newtheorem{corollary}[theorem]{Corollary}
\theoremstyle{remark}
\newcommand{\Hh}{\mathbb H}
\newcommand{\R}{\mathbb R}
\newcommand{\E}{\mathbb E}
\newcommand{\PP}{\mathbb P}
\newcommand{\Tt}{\mathbb T}
\newcommand{\ac}{\mathrm{ac}}
\newcommand{\dd}{\,\mathrm d}
\newcommand{\one}{\mathbf1}
\newcommand{\norm}[1]{\left\lVert#1\right\rVert}
\DeclareMathOperator{\Spec}{Spec}
\DeclareMathOperator{\Imn}{Im}
\DeclareMathOperator{\Bin}{Bin}
\title[Quantum percolation conjecture on regular trees]{Quantum percolation conjecture on regular trees}
\date{}
\hypersetup{pdfauthor={Simon Becker and Izak Oltman},pdftitle={Quantum percolation conjecture on regular trees},pdfsubject={Bernoulli bond percolation and spectral thresholds on regular trees}}
\begin{document}
\begin{abstract}
For Bernoulli bond percolation on every regular tree of degree at least
three, we prove that infinite clusters appear strictly before absolutely
continuous spectrum. Throughout an explicit supercritical interval, the
adjacency operator of every open cluster almost surely has no absolutely
continuous component. Together with Bordenave's theorem, this gives $p_c<p_\ac<1$. 
\end{abstract}
\author{Simon Becker}
\address{Bocconi University, Via Roentgen 1, 20136 Milan, Italy}
\email{simon.becker@unibocconi.it}
\author{Izak Oltman}
\address{Department of Mathematics, Northwestern University, 2033 Sheridan Road, Evanston, IL 60208, USA}
\email{ioltman@northwestern.edu}
\maketitle

\section{Introduction and main results}\label{sec:introduction}

Bernoulli bond percolation produces a random subgraph by retaining
each edge independently with probability $p$.
For $p$ larger than the percolation threshold $p_c$,
infinite connected components appear.
The quantum percolation problem asks whether these infinite components
also support extended quantum states, formulated here through the
presence of absolutely continuous spectrum
\cite{Anderson,deGennesLaforeMillot,Bordenave}.
On every regular tree of degree at least three, we prove that the two
phenomena occur at different thresholds: there is an explicit interval
above the percolation threshold in which every open cluster almost
surely has no absolutely continuous spectral component.

Fix an integer $k\geq2$. Let $G=(V,\mathcal E)=\Tt_{k+1}$ be the infinite
$(k+1)$-regular tree, with vertex set $V$ and edge set $\mathcal E$.
For $p\in[0,1]$, retain each edge independently with probability $p$.
Write $\omega=(\omega_e)_{e\in\mathcal E}$, where $\omega_e$ is one for
an open edge and zero otherwise, and denote the product probability and
expectation by $\PP_p$ and $\E_p$. For neighboring vertices $x,y$, write
$y\sim x$ and $\omega_{xy}=\omega_{\{x,y\}}$.
The adjacency operator on the space $\ell^2(V)$ of square-summable
complex functions is
\begin{equation}\label{eq:model}
 (A_\omega u)(x)=\sum_{y\sim x}\omega_{xy}u(y),
 \qquad u\in\ell^2(V),\quad x\in V.
\end{equation}
This operator is bounded and self-adjoint, with operator norm
$\|A_\omega\|\leq k+1$. Write $\mathcal H_\ac(A_\omega)$ for its
absolutely continuous subspace: the vectors whose spectral measures are
absolutely continuous with respect to Lebesgue measure.

\begin{theorem}\label{thm:tree}
For every $k\geq2$ there exists an explicit $\varepsilon_k>0$ such that
\begin{equation*}
 \frac1k<p\leq\frac1k+\varepsilon_k
 \quad\Longrightarrow\quad
 \mathcal H_\ac(A_\omega)=\{0\}\qquad\PP_p\text{-almost surely}
\end{equation*}
on $\Tt_{k+1}$. The constant $\varepsilon_k$ is given in \eqref{eq:epsilon}.
\end{theorem}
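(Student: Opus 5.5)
The approach is to reduce the question to the infinite part of the cluster and run a Green-function argument along its nearly one-dimensional backbone. By ergodicity and the fact that finite clusters carry only point spectrum, it suffices to show that for each vertex $x$ in an infinite cluster the spectral measure of $\delta_x$ has no absolutely continuous part. By the de la Vallée-Poussin / Simon--Wolff type criterion, this is equivalent to $\Imn G_x(E+i0)=0$ for Lebesgue-a.e.\ $E$, almost surely. Here $G_x(z)$ denotes the forward (rooted-subtree) Green function at $x$.

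\textbf{Step 1: backbone structure.} Root the tree and call a vertex a backbone vertex if it has an infinite open line of descent. The backbone is a Galton--Watson tree. Near criticality, $p=1/k+\varepsilon$, its offspring law $N$ satisfies
\[
\PP(N=1)=1-O(\varepsilon) \quad\text{and}\quad \E\log N=O(\varepsilon).
\]
Attached to each backbone vertex is an i.i.d.\ family of finite subcritical bushes. I would compute these quantities explicitly from the survival probability $\theta(p)$, which solves $1-\theta=(1-p\theta)^k$.

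\textbf{Step 2: recursion and a log-average inequality.} The forward Green functions satisfy
\[
G_x=-\Big(z+\sum_y\omega_{xy}G_y\Big)^{-1}.
\]
Separating backbone children $y\in B(x)$ from the real-valued (on the real axis, away from their poles) bush contributions $b_x(E)$ gives
\[
\Imn G_x=|G_x|^2\Big(\eta+\sum_{y\in B(x)}\Imn G_y\Big).
\]
Take logarithms and use $\log\sum_{i\le N}a_i\le\log N+\max_i\log a_i$. Then average along a stationary backbone ray, using the size-biased/ergodic measure on rays (Lyons--Pemantle--Peres). Suppose $\Imn G>0$ on a set of positive measure. Then stationarity of $\log\Imn G$ along the ray forces
\[
\E\log|G_x(E)|^2+\E\log N\ \ge\ 0 .
\]
So it suffices to prove the uniform bound
\[
\E\log|G_x(E)|^2\le-2\gamma_0<0 \quad\text{for a.e.\ } E, \qquad\text{with } 2\gamma_0>\E\log N.
\]
Since $\E\log N=O(\varepsilon)$, this determines $\varepsilon_k$.

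\textbf{Step 3 (main obstacle): a positive Lyapunov exponent.} On stretches where $N=1$, the backbone is a half-line Schrödinger-type operator with i.i.d.\ potential $V_x=b_x(E)$ coming from the decorations. There $-\E\log|G|^2$ is twice its Lyapunov exponent $\gamma(E)$. Furstenberg's theorem gives $\gamma(E)>0$, because the decoration potential is nondegenerate: with probability roughly $(1-p)^{k-1}$ there is no bush, giving $V=0$, and with comparable probability a single leaf, giving $V=-1/E$.

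What is needed, however, is an \emph{explicit} lower bound uniform in $E\in[-(k+1),k+1]$, robust to the $O(\varepsilon)$ perturbation by branching. I would first try a two-point comparison. Condition on the bushes being either empty or a single edge, and bound $\gamma$ from below via the spread of $\log|G|$ under the two Möbius maps
\[
w\mapsto -1/(E+w) \quad\text{and}\quad w\mapsto -1/(E-1/E+w),
\]
for instance through a Pastur-type variance estimate or a direct contraction estimate on an invariant hyperbolic metric. The delicate regions are near $E=0$ and near the endpoints, which must be handled separately, possibly using that leaves create poles of $b_x$. Combining this explicit $\gamma_0$ with the explicit $\E\log N$ from Step 1 yields $\varepsilon_k$ as in \eqref{eq:epsilon}.
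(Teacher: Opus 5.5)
\textbf{Gap: Step 3 does not supply the bound Step 2 needs, and Step 2 does not close as written.}

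The relation $-\E\log|G|^2=2\gamma(E)$ holds for the stationary law of the Riccati variable on the boundary $\R$ (the Furstenberg measure), or for $\Imn z>0$. It says nothing about a stationary law on $\Hh$ at $z=E+i0$, and that law is exactly the object you must exclude. On an $N=1$ stretch your own recursion at $z=E+i0$ is exact, $\Imn G_x=|G_x|^2\,\Imn G_y$. So for such a law, stationarity forces $\E\log|G_x|^2=0$, however large $\gamma(E)$ is. Hence an explicit, $E$-uniform lower bound on the Lyapunov exponent does not give $\E\log|G_x|^2\le-2\gamma_0$ under the measure Step 2 uses. In one dimension, positivity of $\gamma$ excludes a.c.\ spectrum through a pathwise Oseledets/Weyl-disk argument (or Kotani's limit $\Imn z\downarrow0$), not through this average. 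You give no tree analogue of that argument that survives the branching term $W_1+\dots+W_j$.

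Step 2 itself also fails as written. After bounding $\sum_y\Imn G_y\le N\max_y\Imn G_y$, the maximizing child is not the child on a stationary ray. For a uniformly chosen child, Jensen gives the reverse inequality $\E\log|G_x|^2+\E\log N\le0$. Your direction requires choosing the child with probability proportional to $\Imn G_y$, whose entropy is at most $\log N$. It also requires a stationary measure for the environment seen from that value-dependent walk, which is a harmonic-measure-type result you would have to prove.

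\textbf{What is missing, and how the paper supplies it.} You need a quantitative reason, stable under the $O(\varepsilon)$ branching, why the random M\"obius action has no stationary probability law on $\Hh$. The paper works directly on laws:
\begin{itemize}
\item A zero-one argument shows that, given survival, $\Gamma_T(E+i0)$ is either a.s.\ real or has a law $\nu$ on $\Hh$ with $\nu=(1-q)P\nu+qQ\nu$.
\item Averaging over hyperbolic balls makes the M\"obius maps act unitarily on $L^2(\Hh,m)$.
\item The two decorations you also single out (no side branch, one leaf) give two-step relative maps $h_E$ and $\ell_E=g_E^{-1}h_Eg_E$. Suitable powers of these, after conjugation, generate a free group. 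Kesten's bound then gives $\|P^2\|_{2\to2}\le\rho_k<1$ uniformly for $0<|E|\le k+1$, with no special treatment near $E=0$ or the band edges.
\item The branching operator is only shown to be bounded, $\|S_R(Q\mu)\|_2\le B_k\|S_R\mu\|_2$, via a convolution estimate for singular laws. Its weight satisfies $q\le kp-1$, which is small near $p=1/k$.
\end{itemize}
Your two-point idea is the right raw input, but it must be turned into such a spectral gap for laws on $\Hh$, not into a Lyapunov exponent.
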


Fix a root $o\in V$. Its open cluster $C_o$ consists of the vertices
connected to $o$ by open paths and the edges of those paths. Write
$|C_o|$ for its number of vertices and $A_{C_o}$ for its adjacency
operator. Define the connectivity threshold and the onset of absolutely
continuous spectrum by
\begin{align}
 p_c(G)&=\inf\{p\in[0,1]:\PP_p(|C_o|=\infty)>0\},\label{eq:pc}\\
 p_\ac(G)&=\inf\left\{p\in(p_c(G),1]:
 \PP_p\!\left(\mathcal H_\ac(A_{C_o})\ne\{0\}
        \,\middle|\,|C_o|=\infty\right)>0\right\},
 \qquad\inf\varnothing=+\infty.\label{eq:onset}
\end{align}
We omit $G$ when the graph is fixed. For an integer $n\geq0$ and
$a\in[0,1]$, let $\Bin(n,a)$ denote the binomial distribution with $n$
trials and success probability $a$. The forward open descendants have
offspring distribution $\Bin(k,p)$, with mean $kp$. The Galton--Watson
extinction criterion therefore gives $p_c=1/k$
\cite[Proposition~5.4 and Section~5.2]{LyonsPeres}.
Bordenave proved the existence of an absolutely continuous component
for $p$ sufficiently close to one
\cite[Theorem~3 and Corollary~4]{Bordenave}. Thus our theorem gives the
remaining strict inequality in the following statement.

\begin{corollary}\label{cor:threshold}
With the constant in Theorem~\ref{thm:tree},
\begin{equation}\label{eq:separation}
 p_c(\Tt_{k+1})=\frac1k
 <\frac1k+\varepsilon_k\leq p_\ac(\Tt_{k+1})<1,
 \qquad k\geq2.
\end{equation}
\end{corollary}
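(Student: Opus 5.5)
The corollary assembles three ingredients, and the plan is to verify each one separately.

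\emph{The equality $p_c=1/k$.} Cut the root $o$ off from the tree. The open cluster $C_o$ then splits into $k+1$ independent forward subtrees, one through each neighbor. Each forward subtree is a Galton--Watson tree with offspring law $\Bin(k,p)$, whose mean is $kp$. If $p\le 1/k$ and $p<1$, this process is critical or subcritical with nondegenerate offspring, so it dies out almost surely; hence $|C_o|<\infty$ almost surely. If $p>1/k$, it survives with positive probability, and so $\PP_p(|C_o|=\infty)>0$. This gives $p_c=1/k$, as in \cite[Proposition~5.4]{LyonsPeres}. The strict inequality $1/k<1/k+\varepsilon_k$ holds simply because $\varepsilon_k>0$.

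\emph{The middle inequality $1/k+\varepsilon_k\le p_\ac$.} The task is to pass from Theorem~\ref{thm:tree}, which concerns $A_\omega$ on all of $\ell^2(V)$, to the cluster operator $A_{C_o}$. Write $\mathcal C$ for the set of open clusters. Since $A_\omega$ has no matrix entries between distinct clusters, it decomposes as
\begin{equation*}
 A_\omega=\bigoplus_{C\in\mathcal C}A_C \quad\text{on}\quad \ell^2(V)=\bigoplus_{C\in\mathcal C}\ell^2(C).
\end{equation*}
There are countably many clusters, and the absolutely continuous subspace respects orthogonal direct sums of reducing subspaces. Therefore
\begin{equation*}
 \mathcal H_\ac(A_\omega)=\bigoplus_{C\in\mathcal C}\mathcal H_\ac(A_C).
\end{equation*}
Now take $p\in(1/k,1/k+\varepsilon_k]$. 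Theorem~\ref{thm:tree} says the left-hand side is $\{0\}$ almost surely, so in particular $\mathcal H_\ac(A_{C_o})=\{0\}$ almost surely. We also have $\PP_p(|C_o|=\infty)>0$ in this range, so the conditional probability in \eqref{eq:onset} is well defined, and it equals $0$. Consequently no $p$ in $(p_c,1/k+\varepsilon_k]$ belongs to the set whose infimum defines $p_\ac$, and this yields $p_\ac\ge 1/k+\varepsilon_k$.

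\emph{The upper bound $p_\ac<1$.} Bordenave's result \cite[Theorem~3 and Corollary~4]{Bordenave} gives some $p_0<1$ such that, for every $p\in(p_0,1]$, $A_{C_o}$ has a nontrivial absolutely continuous part with positive probability conditionally on $|C_o|=\infty$. Thus the set in \eqref{eq:onset} contains points below $1$, so it is nonempty and $p_\ac\le p_0<1$.

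The only step needing care is the direct-sum decomposition and its compatibility with the absolutely continuous subspace. The point is that for $u\in\ell^2(C)$ the spectral measure of $u$ with respect to $A_\omega$ coincides with its spectral measure with respect to $A_C$. I also need to translate Bordenave's formulation of his result, which may be stated in terms of the spectral measure at the root, into the conditional event used in \eqref{eq:onset}. This is where I expect the main, though mild, obstacle.
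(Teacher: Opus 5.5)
Your proposal is correct and follows the paper's proof. The lower separation comes from Theorem~\ref{thm:tree} restricted to the reducing subspace $\ell^2(V(C_o))$, together with $\PP_p(|C_o|=\infty)>0$, and the upper bound comes from Bordenave. The translation step you flag is settled in the paper in one line: a nontrivial absolutely continuous part of $A_{C_o}$ forces $|C_o|=\infty$, since finite trees have only eigenvalues, so positive unconditional probability gives positive probability conditional on $|C_o|=\infty$.
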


Our contribution is the lower separation $p_\ac>p_c$; the upper bound
$p_\ac<1$ is Bordenave's. The threshold in \eqref{eq:onset} is an onset,
not a claim that spectral type is monotone in $p$; compare the two
threshold conventions in \cite[Section~1.2]{Bordenave}.
The quantum percolation statement proved here is spectral: absence of
an absolutely continuous component is not a claim of pure-point spectrum
or dynamical localization.
The constants below are explicit but not optimized; their role is to prove
strict separation of the two thresholds, not to locate the transition sharply.

For each fixed $p$ in Theorem~\ref{thm:tree}, one probability-one set works
for all open clusters. Indeed, if $\mathcal C(\omega)$ is their collection
and $V(C)$ is the vertex set of $C$, then
\[
 \ell^2(V)=\bigoplus_{C\in\mathcal C(\omega)}\ell^2(V(C)),\qquad
 A_\omega=\bigoplus_{C\in\mathcal C(\omega)}A_C.
\]
For $p>p_c$, the root cluster is infinite with positive probability.
Choose $p_{\rm B}(k)<1$ so that Bordenave's conclusion holds whenever
$p_{\rm B}(k)<p\leq1$. Figure~\ref{fig:threshold-line} locates the two
proved regimes without assigning a spectral type to the region between
them.

\begin{figure}[htbp]
\centering
\begin{tikzpicture}[x=1cm,y=1cm]
  \draw[->,line width=0.6pt] (0,0)--(10.45,0) node[right] {$p$};

  \foreach \x in {0,2.55,4.45,7.9,10}
    \draw[line width=0.6pt] (\x,-0.11)--(\x,0.11);

  \node[below=3pt] at (0,0) {$0$};
  \node[below=3pt,align=center] at (2.55,0) {$p_c=\dfrac1k$};
  \node[below=3pt,align=center] at (4.45,0) {$\dfrac1k+\varepsilon_k$};
  \node[below=3pt,align=center] at (7.9,0) {$p_{\rm B}(k)$};
  \node[below=3pt] at (10,0) {$1$};

  \draw[line width=2.2pt] (0,0.55)--(2.55,0.55);
  \node[above=2pt,align=center] at (1.275,0.55)
    {\scriptsize finite clusters a.s.\\[-1pt]\scriptsize $\Rightarrow$ no a.c.};

  \draw[line width=2.2pt] (2.55,0.55)--(4.45,0.55);
  \node[above=2pt,align=center] at (3.5,0.55)
    {\scriptsize Theorem~\ref{thm:tree}\\[-1pt]\scriptsize no a.c.};

  \draw[gray,densely dashed,line width=1.0pt] (4.45,0.55)--(7.9,0.55);
  \node[above=2pt] at (6.175,0.55) {\scriptsize not determined};

  \draw[line width=2.2pt] (7.9,0.55)--(10,0.55);
  \node[above=2pt,align=center] at (8.95,0.55)
    {\scriptsize Bordenave\\[-1pt]\scriptsize nontrivial a.c.};
\end{tikzpicture}
\caption{The two proved regimes, shown schematically and not to scale.
All clusters are finite for $p\leq p_c$. Theorem~\ref{thm:tree} rules out
an absolutely continuous component on every cluster through
$p_c+\varepsilon_k$. Bordenave's theorem gives a nontrivial absolutely
continuous component with positive probability for $p>p_{\rm B}(k)$.
The intervening region is not classified; no monotonicity of spectral
type is asserted.}
\label{fig:threshold-line}
\end{figure}

\subsection{The strategy of the proof}
The absolutely continuous part of the spectrum is detected by the
imaginary part of the boundary value of the diagonal resolvent, so it is
enough to prove that
\[
\operatorname{Im} G(E+i0)=0
\]
for almost every energy $E$. On a tree, the resolvent satisfies a
recursive relation, and after conditioning on survival this yields a
fixed-point equation for the law $\nu$ of the forward resolvent in the
upper half-plane,
\[
\nu=(1-q)P\nu+qQ\nu,
\]
where $P$ corresponds to exactly one infinite child branch and $Q$ to at
least two infinite child branches. The randomness of the finite side
branches makes $P^2$ a strict contraction on $L^2(\mathbb H)$: already
the two configurations consisting of no finite side branch and one
single-vertex side branch produce distinct M\"obius transformations whose
relative action yields a free-group spectral gap. The operator $Q$ need
not be contractive, but its contribution is multiplied by
\[
q=\PP(K\ge2\mid |T|=\infty),
\qquad
q\le kp-1,
\]
which is arbitrarily small when $p\downarrow 1/k$. After smoothing $\nu$
on the hyperbolic plane, the fixed-point equation therefore gives
\[
\|f\|_2
\le
(\rho+2qB_k)\|f\|_2
<
\|f\|_2,
\]
a contradiction; hence the resolvent boundary values are real almost
everywhere and the absolutely continuous spectrum is absent.

\subsection{Relation to earlier work}
The two sides of \eqref{eq:separation} concern different regimes.
Keller \cite{Keller} proved stability of absolutely continuous spectrum
for random trees under an assumption excluding leaves. Bordenave
\cite{Bordenave} allowed leaves and finite attached branches, including
the binomial offspring distributions arising from bond percolation.
Arras and Bordenave \cite{ArrasBordenave} gave quantitative criteria in
terms of offspring fluctuations. These results concern trees close to
a regular tree. Our theorem concerns the opposite regime: probabilities
just above the first appearance of infinite clusters.

Finite configurations have long been known to produce localized
eigenfunctions \cite{KirkpatrickEggarter,ChayesEtAl,Veselic}.
For bond percolation on the three-regular tree, Carmona, Klein and
Martinelli \cite[Section~7]{CarmonaKleinMartinelli} exhibited a
deterministic countable set, dense in the spectrum, of almost-sure
eigenvalues. Such eigenvalues can coexist with continuous spectrum.
The role of finite branches here is different: their randomness yields
an estimate excluding the entire absolutely continuous component, not
merely constructing individual eigenfunctions.

The Bernoulli variables have no density, so averaging arguments that
require a density cannot be used directly. Localization for singular
random potentials was proved in one dimension by Carmona, Klein and
Martinelli \cite{CarmonaKleinMartinelli}; higher-dimensional continuum
results include Bourgain and Kenig \cite{BourgainKenig} and Germinet and
Klein \cite{GerminetKlein}. For lattice Anderson--Bernoulli operators,
edge localization was established in two dimensions by Ding and Smart
\cite{DingSmart}, in three dimensions by Li and Zhang \cite{LiZhang},
and in all higher dimensions in the recent preprint of Li, Liu and
Zhang \cite{LiLiuZhang}. Li \cite{LiLargeDisorder} also treated large
disorder in two dimensions away from neighborhoods of finitely many
exceptional energies. Those works concern random potentials with fixed
hopping. Here the edges themselves are random, and the conclusion is
absence of absolutely continuous spectrum throughout the spectrum,
not pure-point or dynamical localization.

The resolvent recursion goes back to work of Abou-Chacra, Thouless and
Anderson \cite{AbouChacraThoulessAnderson} and, for quantum percolation,
Harris \cite{Harris}. On fixed trees, absolutely continuous spectrum and
its stability were studied in
\cite{Klein,AizenmanSimsWarzel,FroeseHaslerSpitzer,KellerLenzWarzel};
resonant delocalization is treated in
\cite{AizenmanWarzel,AizenmanWarzelBook}.
The harmonic analysis of fractional-linear averages appears in
Shubin, Vakilian and Wolff \cite{ShubinVakilianWolff} and Wolff
\cite{WolffContraction}. Here the explicit two-configuration calculation
and Kesten's estimate give the uniform contraction needed to compare
finite-branch randomness with rare infinite branching. The ball-average
convolution bound permits the comparison for singular resolvent laws.

There is also a distinction between spectral type in a realization and
regularity after averaging. Let $\mu_o$ be the root spectral measure
of $A_{C_o}$ at $\delta_o$, and set
$\overline\mu_{o,p}(B)=\E_p[\mu_o(B)]$ for Borel sets $B\subset\R$.
By \cite[Corollary~1.7]{BordenaveSenVirag}, this expected measure has a
nonzero continuous (non-atomic) part for every $p>1/k$.
Thus such a part is already present throughout our interval of
almost-sure absence of absolutely continuous spectrum. Continuity
here does not mean absolute continuity. Results on spectral tails,
including those for percolation Laplacians on regular trees
\cite{Reinhold,MullerStollmannSurvey}, address a different question:
the amount of spectral mass near an edge, rather than its spectral type.

Section~\ref{sec:analytic} proves the smoothing, group and convolution
estimates and extracts the fixed-point exclusion criterion.
Section~\ref{sec:trees} derives the conditioned resolvent law, computes
the contraction from the two finite configurations, and proves
Theorem~\ref{thm:tree} and Corollary~\ref{cor:threshold}.

\section{Notation and analytic estimates}\label{sec:analytic}
The proof compares a linear average of real fractional-linear maps
with a nonlinear operation adding independent upper-half-plane
variables. We first regularize arbitrary probability laws without
losing the group action, then prove the two estimates needed for the
comparison. The final proposition records how these estimates exclude
a stationary probability law in the upper half-plane.

\subsection{Notation}\label{sec:notation}
We write $\Hh=\{z\in\mathbb C:\Imn z>0\}$, $\one_B$ for an
indicator, and $\mathcal L(X)$ for the distribution of a random variable.
The notation $X\overset d=Y$ means equality of distributions, and
$\mathcal L(X\mid B)$ denotes conditioning on an event of positive
probability. A superscript $\otimes j$ denotes a $j$-fold product measure.
We use natural logarithms unless a base is displayed.

The percolation probability space is
$\Omega_G=\{0,1\}^{\mathcal E}$ with its product $\sigma$-algebra and
measure $\PP_p=\bigotimes_{e\in\mathcal E}\operatorname{Bernoulli}(p)$;
$\E_p$ denotes expectation. We omit $p$ when it is fixed.
For vertices $o,x$, let $[o,x]$ be the edge set of their unique simple
path. Then $o\leftrightarrow x$ means that all its edges are open, and
$\one_{\{o\leftrightarrow x\}}=\prod_{e\in[o,x]}\omega_e$, with empty
product one. In particular, the root cluster is
\begin{equation}\label{eq:cluster}
 \begin{split}
 V_o(\omega)&=\{x\in V:o\leftrightarrow x\} 
\text{ and }
 C_o(\omega)=\bigl(V_o(\omega),\{\{x,y\}\in\mathcal E:
       x,y\in V_o(\omega),\ \omega_{xy}=1\}\bigr).
 \end{split}
\end{equation}
For a graph $X$, its vertex and edge sets are $V(X)$ and
$\mathcal E(X)$, its adjacency operator is $A_X$, and
$|X|=|V(X)|$. We write $\deg_X(v)$ for the degree of $v$.
For a cluster $C$, $A_C$ is the restriction of $A_\omega$ to the
reducing subspace $\ell^2(V(C))$.

A superscript $*$ denotes the adjoint.
Inner products are conjugate-linear in the first variable;
$\delta_x$ is the coordinate vector at $x$, and $I$ is the identity.
When used as a measure, $\delta_a$ is the unit point mass at $a$.
For a self-adjoint operator $A$, write $\Spec(A)$ for its spectrum,
$\one_B(A)$ for its spectral projections, and
$\mu_u(B)=\langle u,\one_B(A)u\rangle$ for the spectral measure of $u$.
We denote its absolutely continuous part by $\mu_{u,\ac}$ and the
projection onto $\mathcal H_\ac(A)$ by $\Pi_\ac(A)$.
The notation $g(E+i0)$ means the finite limit
$\lim_{\tau\downarrow0}g(E+i\tau)$.

For a graph automorphism $\sigma$, set
$(\sigma\omega)_{\sigma x,\sigma y}=\omega_{xy}$ and
$(Q_\sigma u)(x)=u(\sigma^{-1}x)$. The product measure is invariant
under $\sigma$, and
$A_{\sigma\omega}=Q_\sigma A_\omega Q_\sigma^*$ and
$C_{\sigma o}(\sigma\omega)=\sigma C_o(\omega)$.
Transitivity of the regular tree therefore makes both thresholds
independent of the root.

\subsection{Boundary values and spectral measures}\label{sec:boundaryfacts}
For a bounded self-adjoint $A$ and a unit vector $e$, the scalar
resolvent is the Herglotz function
\[
 g_e(z)=\langle e,(A-z)^{-1}e\rangle
       =\int_\R\frac{d\mu_e(\lambda)}{\lambda-z},\qquad
 \Imn g_e(z)>0\quad(z\in\Hh).
\]
Its finite boundary values exist for Lebesgue-almost every energy by
\cite[Corollary~3.25]{Teschl}. We record how this fact applies to
countably many random operators.

On a probability space $(\Omega,\mathscr A,\PP)$, let $(A_j)_{j\in J}$
be a countable family of random self-adjoint operators on fixed spaces
$\ell^2(X_j)$, where each $X_j$ is countable. Assume measurable matrix
entries and a common deterministic bound $\|A_j\|\leq D$, with $D>0$.
Fix coordinate unit vectors $e_j$ and set
$g_j(z)=\langle e_j,(A_j-z)^{-1}e_j\rangle$.
Finite coordinate compressions converge strongly to $A_j$, hence
their resolvents converge strongly for $z\in\Hh$. It follows that
$(\omega,z)\mapsto g_j(z,\omega)$ is jointly measurable.
The Cauchy criterion along positive rational $\tau$ makes the set
$\mathcal B\subset\R\times\Omega$ where at least one finite boundary
value fails measurable; continuity for $\tau>0$ gives the same limit
along all real $\tau\downarrow0$.
By the cited boundary theorem and countability of $J$, each section
$\mathcal B(\omega)$ has Lebesgue measure zero. Thus Tonelli gives
\begin{equation}\label{eq:generalboundaryfubini}
 \int_{-D}^D\PP\bigl((E,\omega)\in\mathcal B\bigr)\,dE
 =\int |\mathcal B(\omega)\cap[-D,D]|\,d\PP(\omega)=0.
\end{equation}
Consequently, for any fixed countable set $\Sigma\subset\R$, there is
a deterministic Borel set $\mathcal E_{\rm bd}\subset[-D,D]\setminus\Sigma$
of full Lebesgue measure such that, for every $E\in\mathcal E_{\rm bd}$,
all $g_j(E+i0)$ are finite on one probability-one set.

We also use the spectral density formula
\cite[Theorem~3.23 and Section~3.3]{Teschl}:
\begin{equation}\label{eq:acdensitycriterion}
 \frac{d\mu_{e,\ac}}{dE}(E)=\frac1\pi\Imn g_e(E+i0)
 \quad\text{for Lebesgue-almost every }E.
\end{equation}
If these boundary values are real almost everywhere for every vector
in a countable set with dense linear span, then
$\|\Pi_\ac(A)e\|^2=\mu_{e,\ac}(\R)=0$ for each such vector, and
$\mathcal H_\ac(A)=\{0\}$.

\subsection{Averaging over hyperbolic balls}\label{sec:averaging}
We use the standard hyperbolic area and distance on $\Hh$:
\begin{equation}\label{eq:hyperbolic}
 dm(z)=\frac{dx\,dy}{y^2},\qquad
 \cosh d_{\Hh}(z,w)=1+\frac{|z-w|^2}{2\Imn z\Imn w},
 \qquad z=x+iy.
\end{equation}
For $t>0$, let $B_t(w)$ be the hyperbolic ball of radius $t$ about
$w=x_0+iy_0$. Its Euclidean description and area are
\begin{equation}\label{eq:ballgeometry}
 \begin{split}
 B_t(w)&=\{x+iy:(x-x_0)^2+(y-y_0\cosh t)^2<y_0^2\sinh^2t\},\\
 V_t&=m(B_t(w))=2\pi(\cosh t-1).
 \end{split}
\end{equation}
These formulas follow from
\cite[Section~2.1, Proposition~2.14 and Lemma~2.16]{SeriesHyperbolic}.
The maps $g(z)=(az+b)/(cz+d)$ with real coefficients and $ad-bc=1$
preserve distance and area
\cite[Theorem~1.27 and Proposition~2.5]{SeriesHyperbolic}.
They form $\operatorname{PSL}_2(\R)=\operatorname{SL}_2(\R)/\{I,-I\}$.
We identify each matrix with its map; matrix multiplication corresponds
to composition, $gh=g\circ h$.
In particular, for real $a$ we write
\[
 g_a(w)=-\frac1{a+w},\qquad
 g_a=\begin{pmatrix}0&-1\\1&a\end{pmatrix}.
\]

For a Borel measure $\mu$, its image under $g$ is
$(g_\#\mu)(B)=\mu(g^{-1}B)$.
For $t>0$, define the averages of a finite positive measure and of a
function, and the operator $T_g$ on functions, by
\begin{equation}\label{eq:smoothing}
 S_t\mu(z)=\frac{\mu(B_t(z))}{V_t},\qquad
 S_tf(z)=\frac1{V_t}\int_{B_t(z)}f(w)\,dm(w),\qquad
 T_gf=f\circ g^{-1}.
\end{equation}
Thus $g$ acts on points, whereas $T_g$ acts on functions.
All $L^a$ norms below use $m$, and $\|\cdot\|_{2\to2}$ denotes the
operator norm on $L^2(\Hh,m)$. We write $fm$ for a measure with
density $f$; then $S_t(fm)=S_tf$ when $fm$ is finite and positive.

The symmetric kernel $K_t(z,w)=V_t^{-1}\one_{B_t(w)}(z)$ has integral
one in either variable. Tonelli and
$\|S_t\mu\|_2^2\leq\|S_t\mu\|_\infty\|S_t\mu\|_1$ give, for a
probability measure $\mu$,
\begin{equation}\label{eq:smoothingbounds}
 \int S_t\mu\,dm=1,\qquad 0\leq S_t\mu\leq V_t^{-1},\qquad
 0<\|S_t\mu\|_2\leq V_t^{-1/2},\qquad \|S_t\|_{2\to2}\leq1.
\end{equation}
The last inequality is Schur's test
\cite[Lemma~0.32, p.~28]{Teschl}, with both kernel factors equal to
$K_t^{1/2}$. Invariance of distance and area gives
\begin{equation}\label{eq:commutation}
 S_t(g_\#\mu)=T_gS_t\mu,\qquad
 \|T_gf\|_2=\|f\|_2,\qquad T_gT_h=T_{gh}.
\end{equation}
These facts apply also to singular measures, including point masses.

\subsection{A quantitative group estimate}
The following consequence of Kesten's free-group estimate supplies
the constant used in the proof. We keep the short reduction to the
specific maps that occur in the resolvent recursion.

\begin{lemma}\label{lem:relativegap}
Let $D\geq2$ be an integer, $0<|E|\leq D$, and $N_D=3D$. Put
\begin{equation*}
 g_E(w)=-\frac1{w+E},\qquad
 h_E(w)=w-E^{-1},\qquad
 \ell_E=g_E^{-1}h_Eg_E,
 \qquad \delta_D=\frac{4-2\sqrt3}{N_D^2}.
\end{equation*}
Then, for every $f\in L^2(\Hh,m)$,
\begin{equation}\label{eq:commongap}
 \norm{T_{h_E}f-f}_2^2+\norm{T_{\ell_E}f-f}_2^2
 \geq\delta_D\norm f_2^2.
\end{equation}
\end{lemma}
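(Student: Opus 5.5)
The plan is to reduce \eqref{eq:commongap} to Kesten's spectral-radius bound for the free group on two generators. To get freeness we replace $h_E,\ell_E$ by suitable powers, and the factor $N_D^{-2}$ in $\delta_D$ is the cost of passing back to the original maps.

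\emph{Step 1: normal form.} As matrices, $h_E=\begin{pmatrix}1&-1/E\\0&1\end{pmatrix}$ and $g_E=\begin{pmatrix}0&-1\\1&E\end{pmatrix}$. A direct multiplication gives
\[
\ell_E=g_E^{-1}h_Eg_E=\begin{pmatrix}0&-E\\1/E&2\end{pmatrix}.
\]
This matrix has trace $2$, so $\ell_E$ is parabolic, with fixed point $-E$. Conjugating by the translation $\tau(w)=w+E$ fixes $h_E$ and turns $\ell_E$ into
\[
\tau\ell_E\tau^{-1}=\begin{pmatrix}1&0\\1/E&1\end{pmatrix}.
\]
By \eqref{eq:commutation}, $T_\tau$ is unitary and $T_{\tau g\tau^{-1}}=T_\tau T_gT_\tau^{-1}$, so it suffices to treat the pair $u=\begin{pmatrix}1&s\\0&1\end{pmatrix}$, $v=\begin{pmatrix}1&0\\t&1\end{pmatrix}$ with $s=-1/E$ and $t=1/E$. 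For $N=N_D=3D$ the powers $u^N,v^N$ have parameters $Ns,Nt$, and
\[
|N^2st|=\frac{9D^2}{E^2}\geq9\geq4 .
\]
The standard ping-pong argument applies to the sets $\{|x|<2\}$ and $\{|x|>2\}$ of $\R\cup\{\infty\}$, suitably rescaled. It shows that $\Gamma=\langle u^N,v^N\rangle$ is free of rank two and discrete, and torsion-free.

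\emph{Step 2: the representation is a multiple of the regular one.} A discrete torsion-free subgroup of $\operatorname{PSL}_2(\R)$ acts freely and properly on $\Hh$, so it has a measurable fundamental domain $F$. Then $L^2(\Hh,m)\cong\ell^2(\Gamma)\otimes L^2(F,m)$, with $\gamma\mapsto T_\gamma$ corresponding to $\lambda_\Gamma\otimes I$. Kesten's theorem for the free group on two generators then gives
\[
\norm{\tfrac14\bigl(T_a+T_a^*+T_b+T_b^*\bigr)}_{2\to2}\leq\frac{\sqrt3}2,\qquad a=u^N,\ b=v^N .
\]
Expanding squares, $\norm{T_af-f}_2^2+\norm{T_bf-f}_2^2=4\norm f_2^2-\langle(T_a+T_a^*+T_b+T_b^*)f,f\rangle$, and this is at least $(4-2\sqrt3)\norm f_2^2$.

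\emph{Step 3: back to the first powers.} The telescoping identity $T_{h}^Nf-f=\sum_{j<N}T_h^j(T_hf-f)$ and unitarity give $\norm{T_{h^N}f-f}_2\leq N\norm{T_hf-f}_2$, and the same holds for $\ell$. Combining this with Step 2, after undoing the conjugation by $T_\tau$, yields
\[
N^2\bigl(\norm{T_{h_E}f-f}_2^2+\norm{T_{\ell_E}f-f}_2^2\bigr)\geq(4-2\sqrt3)\norm f_2^2,
\]
which is \eqref{eq:commongap} with $\delta_D=(4-2\sqrt3)/N_D^2$.

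The main obstacle is Step 2. One has to justify carefully that the quasi-regular action on $L^2(\Hh,m)$ is an amplification of $\lambda_\Gamma$, which needs a measurable fundamental domain of a free proper action with null boundary. One must also check the ping-pong sets precisely for parabolic generators, using $|N^2st|\geq4$.
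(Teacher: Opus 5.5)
Your proposal is correct and follows the paper's proof essentially step by step. You conjugate by the translation $w\mapsto w+E$ (the paper's $B_E$), which turns $h_E^{N_D},\ell_E^{N_D}$ into upper and lower triangular unipotent matrices with off-diagonal entries of modulus $N_D/|E|\geq3$; you apply Kesten's bound $2\sqrt3$ through the fundamental-domain identification with the regular representation; and you pay the factor $N_D^{2}$ through the telescoping estimate. The only variation is how freeness and discreteness are certified: the paper uses the linear two-cone form of ping-pong on $\R^2$, showing that every nonempty reduced word satisfies $\|W\mp I\|_{\infty\to\infty}>1$, which gives discreteness at once, whereas your ping-pong on $\R\cup\{\infty\}$ with complementary sets gives freeness directly but needs a small extra step for discreteness (for instance, using the slack $|N^2st|\geq9>4$ to leave an open interval that every nontrivial element moves off itself).
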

\begin{proof}
First consider the matrices
\[
 \mathsf U=\begin{pmatrix}1&t\\0&1\end{pmatrix},\qquad
 \mathsf V=\begin{pmatrix}1&0\\-t&1\end{pmatrix},\qquad |t|\geq3.
\]
The usual two-cone argument proves that they generate a discrete free
group; see also \cite{Sanov,Brenner,LyndonUllman} for this classical
construction. Here the needed separation follows directly: for
$n\in\mathbb Z\setminus\{0\}$,
\[
 |y|>|x|\ \Longrightarrow\ |x+nty|>2|y|,\qquad
 |x|>|y|\ \Longrightarrow\ |y-ntx|>2|x|.
\]
Each nonempty reduced word $W$ with $b$ alternating nonzero-power
blocks therefore sends one of the coordinate unit vectors to a vector
of maximum norm greater than $2^b$. Hence
$\|W\mp I\|_{\infty\to\infty}>1$, proving both freeness and
discreteness in $\operatorname{PSL}_2(\R)$.

Write $\Lambda=\langle\mathsf U,\mathsf V\rangle$.
Its action on $\Hh$ is properly discontinuous
\cite[Theorem~4.20]{SeriesHyperbolic} and free, since point stabilizers
are finite and $\Lambda$ is torsion-free.
Choose a Dirichlet fundamental domain $F$
\cite[Theorem~5.3 and Section~5.2.1]{SeriesHyperbolic}; its translates
partition $\Hh$ up to their area-zero boundaries. The identification
\begin{equation}\label{eq:regularrep}
 L^2(\Hh,m)\simeq L^2(F,m;\ell^2(\Lambda)),\qquad
 f\longmapsto\bigl[z\mapsto(f(\gamma z))_{\gamma\in\Lambda}\bigr]
\end{equation}
sends $T_\sigma$ to left translation
$(\lambda_\sigma u)(\gamma)=u(\sigma^{-1}\gamma)$ in the second
coordinate. Kesten's theorem \cite[Theorem~3, p.~347]{Kesten} gives
$\|\lambda_{\mathsf U}+\lambda_{\mathsf U}^{*}
  +\lambda_{\mathsf V}+\lambda_{\mathsf V}^{*}\|=2\sqrt3$.
Expanding the two squares and integrating over $F$ yields
\begin{equation}\label{eq:freegap}
 \|T_{\mathsf U}f-f\|_2^2+\|T_{\mathsf V}f-f\|_2^2
 \geq(4-2\sqrt3)\|f\|_2^2.
\end{equation}

For our maps, the matrices are
\begin{equation}\label{eq:commonmatrices}
 g_E=\begin{pmatrix}0&-1\\1&E\end{pmatrix},\qquad
 h_E=\begin{pmatrix}1&-E^{-1}\\0&1\end{pmatrix},\qquad
 \ell_E=\begin{pmatrix}0&-E\\E^{-1}&2\end{pmatrix}.
\end{equation}
With $B_E=\left(\begin{smallmatrix}1&E\\0&1\end{smallmatrix}\right)$,
direct multiplication gives
\[
 B_Eh_E^{N_D}B_E^{-1}=\begin{pmatrix}1&-N_D/E\\0&1\end{pmatrix},\qquad
 B_E\ell_E^{N_D}B_E^{-1}=\begin{pmatrix}1&0\\N_D/E&1\end{pmatrix}.
\]
Since $|N_D/E|\geq3$, apply \eqref{eq:freegap} to $T_{B_E}f$ and
use the unitary conjugation identity in \eqref{eq:commutation}.
Finally, for any unitary $V$,
\[
 V^N-I=\sum_{j=0}^{N-1}V^j(V-I),\qquad
 \|(V^N-I)f\|\leq N\|(V-I)f\|.
\]
Applying this to $T_{h_E}$ and $T_{\ell_E}$ divides the bound in
\eqref{eq:freegap} by $N_D^2$ and proves \eqref{eq:commongap}.
\end{proof}
\subsection{Convolution bounds for singular laws}\label{sec:sums}
For probability measures $\mu,\nu$ on $\Hh$, define their convolution
$\mu*\nu$ on a Borel set $B\subset\Hh$ by
\[
 (\mu*\nu)(B)=\int_{\Hh}\int_{\Hh}\one_B(z+w)\,d\mu(z)\,d\nu(w)
 \quad(B\subset\Hh\text{ Borel}).
\]
Thus $\mu*\nu$ is the distribution of the ordinary complex sum of independent
variables with distributions $\mu,\nu$. For integers $j\geq1$, we write
\[
 \nu^{*1}=\nu,\qquad \nu^{*(j+1)}=\nu^{*j}*\nu.
\]
Adding $b\in\Hh$ changes hyperbolic distances: for $z\ne w$,
\[
 \cosh d_{\Hh}(z+b,w+b)-1
 =\frac{|z-w|^2}{2(\Imn z+\Imn b)(\Imn w+\Imn b)}
 <\cosh d_{\Hh}(z,w)-1.
\]
For the estimates in this subsection, fix
\begin{equation}\label{eq:geometryconstants}
 R=\log\frac43,\qquad r=\log\frac54,\qquad L=2,\qquad
 C=\frac{4V_LV_{L+R}}{V_rV_R}>4.
\end{equation}
Set
\begin{equation}\label{eq:Bk}
 B_k=\left(\frac C2\right)^{\lceil\log_2 k\rceil},\qquad k\geq2.
\end{equation}

The next estimate applies even when the distributions have no densities.
For densities, the required bound follows from Young's convolution
inequality. A comparison of ball averages then gives the result for
arbitrary probability measures.

\begin{proposition}\label{prop:convolution}
For probability measures $\mu,\nu$ on $\Hh$,
\begin{equation}\label{eq:mixed}
 \norm{S_R(\mu*\nu)}_2
 \leq\frac C4\bigl(\norm{S_R\mu}_2+\norm{S_R\nu}_2\bigr).
\end{equation}
Consequently, with $B_k$ defined in \eqref{eq:Bk},
\begin{equation}\label{eq:jsum}
 \norm{S_R(\nu^{*j})}_2\leq B_k\norm{S_R\nu}_2,
 \qquad 1\leq j\leq k.
\end{equation}
\end{proposition}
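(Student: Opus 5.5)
The plan is to reduce the smoothed convolution to smoothed translates of $\mu$ and $\nu$. The key fact is that translation by $b\in\Hh$ shrinks hyperbolic distances (the displayed identity before \eqref{eq:geometryconstants}), but only by a bounded amount when $\Imn b$ is no larger than the imaginary part of the point being translated. Let $Z\sim\mu$ and $W\sim\nu$ be independent, and split $\mu*\nu=\rho_1+\rho_2$. Here $\rho_1$ is the law of $Z+W$ restricted to $\{\Imn W\le\Imn Z\}$, and $\rho_2$ is the law restricted to the complement. Since $S_R$ is linear and positive, $\|S_R(\mu*\nu)\|_2\le\|S_R\rho_1\|_2+\|S_R\rho_2\|_2$. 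By symmetry it suffices to show $\|S_R\rho_1\|_2\le\frac C4\|S_R\mu\|_2$.

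\emph{Step 1 (geometry).} Fix $w\in\Hh$ and $u\in\Hh$. I claim that the set $\{z:\Imn z\ge\Imn w,\ d_\Hh(z+w,u)<R\}$ is contained in a hyperbolic ball $B_L(z_0)$ of radius $L=2$ for a suitable centre $z_0$; take $z_0$ to be any point of the set. For two points $z,z'$ of the set, $\Imn(z+w)\le2\Imn z$ and $\Imn(z'+w)\le 2\Imn z'$. Hence
\[
\cosh d_\Hh(z,z')-1\le4\bigl(\cosh d_\Hh(z+w,z'+w)-1\bigr)\le4(\cosh 2R-1).
\]
With $R=\log\frac43$ this gives $d_\Hh(z,z')<2=L$. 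This is the elementary computation I would check carefully; the choices of $r$ and $L$ presumably serve here and in Step 2.

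\emph{Step 2 (comparison of ball masses).} For any finite measure $\mu$ and any $z_0$,
\[
\mu(B_L(z_0))\le\frac1{V_R}\int_{B_{L+R}(z_0)}\mu(B_R(y))\,dm(y).
\]
To see this, note that each $x\in B_L(z_0)$ satisfies $B_R(x)\subset B_{L+R}(z_0)$, so Tonelli applies. The small radius $r$ enters if one first covers by $r$-balls; I expect the constant $4V_LV_{L+R}/(V_rV_R)$ to come out of this bookkeeping. Combining with Step 1,
\[
S_R\rho_1(u)\le\int\frac{\mu\bigl(B_L(z_0(u,w))\bigr)}{V_R}\,d\nu(w)\le\frac{V_{L+R}}{V_R}\int (\text{a ball average of } S_R\mu \text{ over } B_{L+R}(z_0))\,d\nu(w).
\]
Then I take $L^2(dm(u))$ norms using Minkowski's integral inequality in $w$. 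For fixed $w$, the map $u\mapsto z_0\approx u-w$ has Jacobian $(\Imn u/\Imn(u-w))^2$ with respect to $m$. On the relevant region $\Imn(u-w)\ge\Imn w$ this Jacobian is at most $4$, which produces a factor $2$ in the $L^2$ norm. Schur's test for the ball-average kernel contributes at most $1$ by \eqref{eq:smoothingbounds}. These steps should yield $\|S_R\rho_1\|_2\le\frac C4\|S_R\mu\|_2$ and prove \eqref{eq:mixed}. The main obstacle is this change of variables. Since $z_0$ should be chosen measurably, I would instead take the concrete centre $u-w$ when $u-w\in\Hh$ with $\Imn(u-w)\ge\Imn w$, and re-verify Step 1 for that centre, possibly with the enlarged radius absorbed into $V_L$ versus $V_r$.

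\emph{Step 3 (iterated sums).} For \eqref{eq:jsum}, write $\nu^{*j}=\nu^{*a}*\nu^{*b}$ with $a=\lceil j/2\rceil$ and $b=\lfloor j/2\rfloor$, which is legitimate by associativity and commutativity of convolution. Then \eqref{eq:mixed} gives $\|S_R\nu^{*j}\|_2\le\frac C2\max(\|S_R\nu^{*a}\|_2,\|S_R\nu^{*b}\|_2)$. Induction on the depth $\lceil\log_2 j\rceil$ yields $\|S_R\nu^{*j}\|_2\le (C/2)^{\lceil\log_2 j\rceil}\|S_R\nu\|_2$. Since $C/2>2>1$ and $j\le k$, this is bounded by $B_k\|S_R\nu\|_2$.
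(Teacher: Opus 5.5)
Your argument works and takes a genuinely different route from the paper, but two details in Step~2 need correcting.

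\emph{The region is larger than you state.} If $\Imn z\ge\Imn w$ and $z+w\in B_R(u)$, then $\Imn u\ge e^{-R}\Imn(z+w)\ge\tfrac32\Imn w$. So the set $\{z:\Imn z\ge\Imn w,\ z+w\in B_R(u)\}$ can be nonempty whenever $\Imn(u-w)\ge\tfrac12\Imn w$, not only when $\Imn(u-w)\ge\Imn w$. Discarding the $u$ with $\tfrac12\Imn w\le\Imn(u-w)<\Imn w$ would lose mass. On the correct region you have $\Imn(z+w)\le2\Imn z$ and $\Imn u\le3\Imn(u-w)$, so
\[
 \cosh d_{\Hh}(z,u-w)-1
 =\frac{\Imn(z+w)\,\Imn u}{\Imn z\,\Imn(u-w)}
  \bigl(\cosh d_{\Hh}(z+w,u)-1\bigr)
 <6(\cosh R-1)=\tfrac14 .
\]
Hence the whole set lies in $B_L(u-w)$, which also settles the measurability of the centre.

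\emph{The Jacobian goes the other way.} Translation preserves $dx\,dy$, so $dm(u)=\bigl(\Imn(u-w)/\Imn u\bigr)^2\,dm(u-w)\le dm(u-w)$, and no factor $2$ arises.

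With these fixes, Minkowski's inequality in $w$ and $\norm{S_{L+R}}_{2\to2}\le1$ give $\norm{S_R\rho_1}_2\le\frac{V_{L+R}}{V_R}\norm{S_R\mu}_2$. This is at most $\frac C4\norm{S_R\mu}_2$ only because $V_r\le V_L$. So your bookkeeping produces a smaller constant than $C/4$, not $C/4$ itself, and the radius $r$ plays no role; you should state this final comparison explicitly.

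\emph{Comparison with the paper.} The paper first treats $L^2$ densities. There the $m$-density of the sum is $y^2(F*G)$, and the identity $y(F*G)=(yF)*G+F*(yG)$ with Young's inequality on $\R^2$ gives $\norm{\mathcal Q(f,g)}_2\le\norm f_2+\norm g_2$. It reaches singular laws through the pointwise domination $S_R(\mu*\nu)\le\frac{V_L^2}{V_rV_R}\mathcal Q(S_L\mu,S_L\nu)$, proved by a case split on $\Imn w_1\le\Imn w_2$; this is where $r$ enters. It then applies the same radius comparison as your Step~2.

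Your route never passes through densities. Splitting by which summand is lower writes each piece as a $\nu$-average of translates of $\mu$. Each translate is controlled by the bounded distortion of $z\mapsto z+w$ on $\{\Imn z\ge\Imn w\}$ and a Jacobian at most one. This avoids the lower bound on $h_{w_1,w_2}$ and yields the better constant $V_{L+R}/V_R$. The paper's route instead isolates the clean density inequality as the underlying mechanism.

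Your Step~3 is the same dyadic splitting as in the paper.
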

\begin{proof}
First let $\mu=fm$ and $\nu=gm$, where $f,g\in L^2(\Hh,m)$ are
probability densities. Put $F(x,y)=y^{-2}f(x+iy)$ and
$G(x,y)=y^{-2}g(x+iy)$ for $y>0$, and extend both functions by zero
to $\R^2$. Write $dA=dx\,dy$. With ordinary convolution on $\R^2$,
the density of the sum with respect to $m$ is
\begin{equation}\label{eq:convolutiondensity}
 \mathcal Q(f,g)(x+iy)=y^2(F*G)(x,y),\qquad
 y(F*G)=(yF)*G+F*(yG).
\end{equation}
Young's $L^1*L^2\to L^2$ inequality, which in this context also follows from Schur's
test \cite[Lemma~0.32, p.~28]{Teschl} with kernel $G(a-b)$ or $F(a-b)$
on $\R^2$, gives
\begin{equation}\label{eq:densitybound}
 \begin{aligned}
 \|\mathcal Q(f,g)\|_2
 &=\|y(F*G)\|_{L^2(dA)}\leq\|yF\|_{L^2(dA)}\|G\|_{L^1(dA)}
       +\|F\|_{L^1(dA)}\|yG\|_{L^2(dA)}
 =\|f\|_2+\|g\|_2.
 \end{aligned}
\end{equation}
Here $\|F\|_{L^1(dA)}=\|G\|_{L^1(dA)}=1$ because $fm$ and $gm$
are probability measures.

To pass to arbitrary measures, we compare their ball averages.
The ball formula \eqref{eq:ballgeometry} implies
\begin{equation}\label{eq:ballbounds}
 z\in B_t(w)\quad\Longrightarrow\quad
 |z-w|\leq(e^t-1)\Imn w,\qquad
 e^{-t}\Imn w\leq\Imn z\leq e^t\Imn w.
\end{equation}
Set $\kappa_L(z,w)=V_L^{-1}\one_{B_L(w)}(z)$ and let
$h_{w_1,w_2}=\mathcal Q(\kappa_L(\cdot,w_1),\kappa_L(\cdot,w_2))$.
Suppose first that $u:=\Imn w_1\leq v:=\Imn w_2$.
For $z\in B_R(w_1+w_2)$ and $w\in B_r(w_1)$,
\eqref{eq:ballbounds} and the chosen radii give
\[
 \Imn(z-w)\geq\tfrac34(u+v)-\tfrac54u\geq\tfrac v4,
 \qquad
 |z-w-w_2|\leq\tfrac{u+v}{3}+\tfrac u4\leq\tfrac{11v}{12}.
\]
Consequently,
\begin{equation}\label{eq:ballinclusion}
 \cosh d_{\Hh}(z-w,w_2)
 \leq1+\frac{(11v/12)^2}{2v(v/4)}=\frac{193}{72}<\cosh L,
\end{equation}
so $w\in B_L(w_1)$ and $z-w\in B_L(w_2)$.
Writing $y=\Imn z$, the convolution density therefore satisfies
\[
 h_{w_1,w_2}(z)
 =\frac1{V_L^2}\int_{\substack{w\in B_L(w_1)\\z-w\in B_L(w_2)}}
       \frac{y^2}{(\Imn(z-w))^2}\,dm(w)
 \geq\frac{V_r}{V_L^2},\qquad z\in B_R(w_1+w_2).
\]
The last inequality uses $0<\Imn(z-w)<y$ and integrates over
$B_r(w_1)$. Symmetry in $w_1,w_2$ gives the same bound when $u>v$.
Integrating this bound against $\mu(dw_1)\nu(dw_2)$ yields
\begin{equation}\label{eq:positivecomparison}
 S_R(\mu*\nu)\leq\frac{V_L^2}{V_rV_R}
                         \mathcal Q(S_L\mu,S_L\nu).
\end{equation}
Tonelli's theorem applies because all kernels are nonnegative;
no density assumption on $\mu,\nu$ is needed.

It remains to compare the two averaging radii. If $w\in B_L(z)$,
then $B_R(w)\subset B_{L+R}(z)$ by the triangle inequality. Hence
\[
 S_{L+R}(S_R\mu)(z)
 =\frac1{V_{L+R}V_R}
       \int m(B_{L+R}(z)\cap B_R(w))\,d\mu(w)
 \geq\frac{V_L}{V_{L+R}}S_L\mu(z).
\]
The $L^2$ contraction in \eqref{eq:smoothingbounds} now gives
\begin{equation}\label{eq:scalecomparison}
 \|S_L\mu\|_2\leq\frac{V_{L+R}}{V_L}\|S_R\mu\|_2.
\end{equation}
Since $S_L\mu$ and $S_L\nu$ are $L^2(m)$ probability densities,
\eqref{eq:densitybound}, \eqref{eq:positivecomparison} and
\eqref{eq:scalecomparison} imply
\[
 \|S_R(\mu*\nu)\|_2
 \leq\frac{V_LV_{L+R}}{V_rV_R}
          \bigl(\|S_R\mu\|_2+\|S_R\nu\|_2\bigr).
\]
Since $C/4=V_LV_{L+R}/(V_rV_R)$, this proves \eqref{eq:mixed}.

Finally, put $A_n=\max_{1\leq j\leq2^n}\|S_R(\nu^{*j})\|_2$.
For $2\leq j\leq2^n$, split the sum into
$\lfloor j/2\rfloor$ and $\lceil j/2\rceil$ terms.
Equation~\eqref{eq:mixed} gives
\[
 A_n\leq\frac C2 A_{n-1}.
\]
Since $C>4$, this also covers $j=1$. Hence
$A_n\leq(C/2)^n\|S_R\nu\|_2$; taking
$n=\lceil\log_2 k\rceil$ proves \eqref{eq:jsum}.
\end{proof}

\subsection{A fixed-point exclusion criterion}\label{sec:fixedpoint-exclusion}
The following observation separates the final functional-analytic step
from the probabilistic calculation. 

\begin{proposition}\label{prop:fixedpoint-exclusion}
Let $\zeta$ be a countably supported probability measure on
$\operatorname{PSL}_2(\R)$. Define its averages on measures and functions by
\[
 P\mu=\int g_\#\mu\,d\zeta(g),\qquad
 Pf=\int T_gf\,d\zeta(g).
\]
Let $Q$ map probability measures on $\Hh$ to probability measures on
$\Hh$; no linearity is assumed for $Q$. Suppose there are constants
$0\leq\rho<1$ and $B<\infty$ such that
\[
 \|P^2\|_{2\to2}\leq\rho,\qquad
 \|S_R(Q\mu)\|_2\leq B\|S_R\mu\|_2
 \quad\text{for every probability measure }\mu\text{ on }\Hh.
\]
For $q\in[0,1]$, the equation $\nu=(1-q)P\nu+qQ\nu$ has no probability
solution on $\Hh$ whenever
\begin{equation}\label{eq:fixedpoint-threshold}
 (1-q)^2\rho+q(2-q)B<1.
\end{equation}
In particular, the sufficient condition $\rho+2qB<1$ implies
\eqref{eq:fixedpoint-threshold}.
\end{proposition}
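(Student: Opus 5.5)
Suppose, for contradiction, that a probability measure $\nu$ on $\Hh$ satisfies $\nu=(1-q)P\nu+qQ\nu$. The plan is to iterate the equation once, apply $S_R$, and take $L^2$ norms. Then $f:=S_R\nu$ should satisfy $\|f\|_2\le\bigl((1-q)^2\rho+q(2-q)B\bigr)\|f\|_2$. Since $0<\|f\|_2<\infty$ by \eqref{eq:smoothingbounds}, this contradicts \eqref{eq:fixedpoint-threshold}.

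\emph{Step 1: $S_R$ intertwines the two averages.} Because $\zeta$ is countably supported, $P\mu=\sum_g\zeta(g)\,g_\#\mu$ is a countable convex combination. Ball averaging $S_R$ is linear and monotone on finite positive measures and commutes with countable sums. Together with \eqref{eq:commutation}, this gives $S_R(P\mu)=\sum_g\zeta(g)T_gS_R\mu=P(S_R\mu)$, where on the right $P$ acts on functions. The series converges in $L^2$ because each $T_g$ is unitary and $\sum_g\zeta(g)=1$. It follows that $\|P\|_{2\to2}\le1$, and that $P\mu$ is again a probability measure.

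\emph{Step 2: iterate the equation.} Substituting the equation into its own first term gives
\[
 \nu=(1-q)^2P^2\nu+(1-q)q\,P(Q\nu)+q\,Q\nu.
\]
This uses only the linearity of $P$ on measures; no linearity of $Q$ is needed, since $Q\nu$ is just a fixed probability measure. Now apply $S_R$, use Step 1, and take norms:
\[
 \|f\|_2\le(1-q)^2\|P^2f\|_2+(1-q)q\|P\,S_R(Q\nu)\|_2+q\|S_R(Q\nu)\|_2 .
\]
The first term is at most $(1-q)^2\rho\|f\|_2$. In the second, $\|P\|_{2\to2}\le1$, and the hypothesis on $Q$ bounds both $\|S_R(Q\nu)\|_2$ terms by $B\|f\|_2$. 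The total coefficient is therefore
\[
 (1-q)^2\rho+\bigl((1-q)q+q\bigr)B=(1-q)^2\rho+q(2-q)B ,
\]
which is exactly the left side of \eqref{eq:fixedpoint-threshold}. If this is $<1$, then $\|f\|_2<\|f\|_2$, which is impossible.

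\emph{Step 3: the sufficient condition.} We have $(1-q)^2\rho\le\rho$ and $q(2-q)\le2q$, so $\rho+2qB<1$ implies \eqref{eq:fixedpoint-threshold}.

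The main obstacle is the justification in Step 1 that $S_R$ commutes with the countable mixture defining $P$ at the level of measures. This also covers the case where $\nu$ is singular. The exchange is handled by monotone convergence applied to $\mu(B_R(z))$ pointwise, followed by $L^2$ convergence of the function series. Everything else is bookkeeping.
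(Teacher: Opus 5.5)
Your proposal is correct and takes essentially the same route as the paper. You iterate the fixed-point equation once using only the linearity of $P$, apply $S_R$ together with $S_R(P\mu)=P(S_R\mu)$, and bound the three resulting terms by $(1-q)^2\rho$, $q(1-q)B$ and $qB$ times $\|S_R\nu\|_2$, which contradicts $0<\|S_R\nu\|_2<\infty$; your extra justification of the countable-mixture commutation and your explicit check of the sufficient condition $\rho+2qB<1$ just spell out details the paper leaves implicit.
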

\begin{proof}
Suppose that $\nu$ is a probability solution, and set
$f=S_R\nu$ and $h=S_R(Q\nu)$. By \eqref{eq:smoothingbounds},
$0<\|f\|_2<\infty$. The commutation relation
\eqref{eq:commutation} gives $S_R(P\mu)=P(S_R\mu)$, and averaging
unitaries gives $\|P\|_{2\to2}\leq1$. Linearity of $P$, without any
iteration of $Q$, gives
\begin{equation}\label{eq:twostepfixedpoint}
 \nu=(1-q)^2P^2\nu+q(1-q)P(Q\nu)+qQ\nu.
\end{equation}
Applying $S_R$ to \eqref{eq:twostepfixedpoint}, we obtain
\[
 f=(1-q)^2P^2f+q(1-q)Ph+qh.
\]
Consequently,
\[
 \|f\|_2\leq(1-q)^2\rho\|f\|_2+q(2-q)\|h\|_2
 \leq\bigl((1-q)^2\rho+q(2-q)B\bigr)\|f\|_2.
\]
This contradicts \eqref{eq:fixedpoint-threshold} and $\|f\|_2>0$.
\end{proof}

\section{The regular-tree argument}\label{sec:trees}

\subsection{The infinite forward cluster and its finite side branches}\label{sec:resolvent-setup}
We fix $k\geq2$ and $p\in(1/k,1)$, and write
$p_\ac(k)=p_\ac(\Tt_{k+1})$. We omit the subscript $p$ on probabilities
and expectations when the parameter is fixed.

Choose a fixed neighbor $o_-$ of $o$ in $\Tt_{k+1}$. Delete the single
edge $\{o_-,o\}$ from this deterministic tree. The deletion separates the
tree into two components. Discard the entire component containing $o_-$,
and call the component containing $o$ the \emph{forward tree} $\Tt^+$.
This cut is imposed regardless of whether $\omega_{\{o_-,o\}}$ is zero
or one; it is not a percolation event.

Root $\Tt^+$ at $o$ and orient each edge away from $o$. The root has
$k$ children and no parent: one of its original $k+1$ neighbors has been
removed. Every other vertex has one parent and $k$ children. These
arrows only specify the parent--child relations; the adjacency operators
still use undirected edges.

Now restrict the configuration $\omega$ to $\Tt^+$, retain its open
edges, and let $T$ be the connected component containing $o$. Equivalently,
using the path notation $[o,x]$ introduced above,
\[
 V(T)=\{x\in V(\Tt^+):\omega_e=1\text{ for every }e\in[o,x]\}.
\]
Every vertex of $\Tt^+$ has exactly $k$ children. After percolation,
 a vertex of $T$ has between zero and $k$ children, and $T$ can be
 finite or infinite. Figure~\ref{fig:forward-construction}
shows the construction for $k=2$.

\begin{figure}[htbp]
\centering
\begingroup
\begin{tikzpicture}[
  x=1cm,y=0.95cm,
  font=\fontsize{9}{10.5}\selectfont,
  line cap=round,line join=round,
  fwd/title/.style={font=\fontsize{9}{11}\selectfont\bfseries,align=center},
  fwd/note/.style={align=center,inner sep=0pt},
  fwd/vertex/.style={circle,fill=black,draw=black,inner sep=0pt,minimum size=3.5pt},
  fwd/outside vertex/.style={circle,fill=white,draw=black!60,line width=0.55pt,
                            inner sep=0pt,minimum size=3.5pt},
  fwd/tree edge/.style={draw=black,line width=0.60pt},
  fwd/oriented/.style={draw=black,line width=0.60pt,->,>=stealth,shorten >=2.5pt},
  fwd/cluster edge/.style={draw=black,line width=1.7pt},
  fwd/outside edge/.style={draw=black!60,line width=0.60pt},
  fwd/closed/.style={draw=black!60,line width=0.70pt,dash pattern=on 2.3pt off 1.7pt}
]
\node[font=\fontsize{10}{12}\selectfont\bfseries] at (5.40,3.92)
  {From the regular tree to its open forward cluster ($k=2$)};
\draw[black!20,line width=0.4pt] (2.70,3.50)--(2.70,-3.05);
\draw[black!20,line width=0.4pt] (8.10,3.50)--(8.10,-3.05);

\begin{scope}
  \node[fwd/title] at (0,3.40) {1. Cut one fixed edge};
  \node at (0,3.00) {The original tree $\mathbb T_3$};
  \path[fill=black!5,rounded corners=3pt] (-2.25,0.85) rectangle (2.25,2.72);
  \node[font=\fontsize{8.5}{10}\selectfont] at (0,2.51) {discard this whole component};
  \coordinate (parent) at (0,1.10);
  \draw[black!60,line width=0.6pt] (parent)--(-0.70,1.72);
  \draw[black!60,line width=0.6pt] (parent)--(0.70,1.72);
  \foreach \x in {-0.70,0.70}{
    \draw[black!60,line width=0.6pt] (\x,1.72)--(\x-0.22,2.04);
    \draw[black!60,line width=0.6pt] (\x,1.72)--(\x+0.22,2.04);
    \node[circle,fill=black!60,inner sep=0pt,minimum size=3.5pt] at (\x,1.72) {};
    \node[text=black!60,font=\scriptsize] at (\x,2.21) {$\vdots$};
  }
  \node[circle,fill=black!60,inner sep=0pt,minimum size=3.5pt] at (parent) {};
  \node[anchor=west,inner sep=1pt] at (0.12,1.13) {$o_-$};
  \draw[fwd/tree edge] (0,0)--(0,0.42);
  \draw[black!60,line width=0.6pt] (0,0.71)--(parent);
  \draw[line width=0.8pt] (-0.11,0.46)--(0.11,0.67);
  \draw[line width=0.8pt] (-0.11,0.67)--(0.11,0.46);
  \node[anchor=west,inner sep=1pt] at (0.19,0.565) {delete $\{o_-,o\}$};
  \coordinate (o) at (0.00,0.00);
  \coordinate (a) at (-1.15,-0.95);
  \coordinate (b) at (1.15,-0.95);
  \coordinate (aa) at (-1.72,-1.90);
  \coordinate (ab) at (-0.57,-1.90);
  \coordinate (ba) at (0.57,-1.90);
  \coordinate (bb) at (1.72,-1.90);
  \coordinate (aaa) at (-1.91,-2.47);
  \coordinate (aab) at (-1.53,-2.47);
  \coordinate (aba) at (-0.76,-2.47);
  \coordinate (abb) at (-0.38,-2.47);
  \coordinate (baa) at (0.38,-2.47);
  \coordinate (bab) at (0.76,-2.47);
  \coordinate (bba) at (1.53,-2.47);
  \coordinate (bbb) at (1.91,-2.47);
  \draw[fwd/tree edge] (o)--(a);
  \draw[fwd/tree edge] (o)--(b);
  \draw[fwd/tree edge] (a)--(aa);
  \draw[fwd/tree edge] (a)--(ab);
  \draw[fwd/tree edge] (b)--(ba);
  \draw[fwd/tree edge] (b)--(bb);
  \draw[fwd/tree edge] (aa)--(aaa);
  \draw[fwd/tree edge] (aa)--(aab);
  \draw[fwd/tree edge] (ab)--(aba);
  \draw[fwd/tree edge] (ab)--(abb);
  \draw[fwd/tree edge] (ba)--(baa);
  \draw[fwd/tree edge] (ba)--(bab);
  \draw[fwd/tree edge] (bb)--(bba);
  \draw[fwd/tree edge] (bb)--(bbb);
  \node[fwd/vertex] at (o) {};
  \node[fwd/vertex] at (a) {};
  \node[fwd/vertex] at (b) {};
  \node[fwd/vertex] at (aa) {};
  \node[fwd/vertex] at (ab) {};
  \node[fwd/vertex] at (ba) {};
  \node[fwd/vertex] at (bb) {};
  \node[fwd/vertex] at (aaa) {};
  \node[fwd/vertex] at (aab) {};
  \node[fwd/vertex] at (aba) {};
  \node[fwd/vertex] at (abb) {};
  \node[fwd/vertex] at (baa) {};
  \node[fwd/vertex] at (bab) {};
  \node[fwd/vertex] at (bba) {};
  \node[fwd/vertex] at (bbb) {};
  \node[text=black,font=\scriptsize] at (-1.91,-2.82) {$\vdots$};
  \node[text=black,font=\scriptsize] at (-1.53,-2.82) {$\vdots$};
  \node[text=black,font=\scriptsize] at (-0.76,-2.82) {$\vdots$};
  \node[text=black,font=\scriptsize] at (-0.38,-2.82) {$\vdots$};
  \node[text=black,font=\scriptsize] at (0.38,-2.82) {$\vdots$};
  \node[text=black,font=\scriptsize] at (0.76,-2.82) {$\vdots$};
  \node[text=black,font=\scriptsize] at (1.53,-2.82) {$\vdots$};
  \node[text=black,font=\scriptsize] at (1.91,-2.82) {$\vdots$};
  \node[anchor=west,inner sep=1pt] at (0.10,0.04) {$o$};
\end{scope}

\begin{scope}[xshift=5.40cm]
  \node[fwd/title] at (0,3.40) {2. Keep the side of $o$};
  \node at (0,3.00) {$\mathbb T^+$: deterministic};
  \node[fwd/note] at (0,2.18) {Every vertex has two children.};
  \node[fwd/note] at (0,1.66) {Only the root has no parent.};
  \node[fwd/note] at (0,1.14) {Arrows point away from $o$.};
  \coordinate (o) at (0.00,0.00);
  \coordinate (a) at (-1.15,-0.95);
  \coordinate (b) at (1.15,-0.95);
  \coordinate (aa) at (-1.72,-1.90);
  \coordinate (ab) at (-0.57,-1.90);
  \coordinate (ba) at (0.57,-1.90);
  \coordinate (bb) at (1.72,-1.90);
  \coordinate (aaa) at (-1.91,-2.47);
  \coordinate (aab) at (-1.53,-2.47);
  \coordinate (aba) at (-0.76,-2.47);
  \coordinate (abb) at (-0.38,-2.47);
  \coordinate (baa) at (0.38,-2.47);
  \coordinate (bab) at (0.76,-2.47);
  \coordinate (bba) at (1.53,-2.47);
  \coordinate (bbb) at (1.91,-2.47);
  \draw[fwd/oriented] (o)--(a);
  \draw[fwd/oriented] (o)--(b);
  \draw[fwd/oriented] (a)--(aa);
  \draw[fwd/oriented] (a)--(ab);
  \draw[fwd/oriented] (b)--(ba);
  \draw[fwd/oriented] (b)--(bb);
  \draw[fwd/oriented] (aa)--(aaa);
  \draw[fwd/oriented] (aa)--(aab);
  \draw[fwd/oriented] (ab)--(aba);
  \draw[fwd/oriented] (ab)--(abb);
  \draw[fwd/oriented] (ba)--(baa);
  \draw[fwd/oriented] (ba)--(bab);
  \draw[fwd/oriented] (bb)--(bba);
  \draw[fwd/oriented] (bb)--(bbb);
  \node[fwd/vertex] at (o) {};
  \node[fwd/vertex] at (a) {};
  \node[fwd/vertex] at (b) {};
  \node[fwd/vertex] at (aa) {};
  \node[fwd/vertex] at (ab) {};
  \node[fwd/vertex] at (ba) {};
  \node[fwd/vertex] at (bb) {};
  \node[fwd/vertex] at (aaa) {};
  \node[fwd/vertex] at (aab) {};
  \node[fwd/vertex] at (aba) {};
  \node[fwd/vertex] at (abb) {};
  \node[fwd/vertex] at (baa) {};
  \node[fwd/vertex] at (bab) {};
  \node[fwd/vertex] at (bba) {};
  \node[fwd/vertex] at (bbb) {};
  \node[text=black,font=\scriptsize] at (-1.91,-2.82) {$\vdots$};
  \node[text=black,font=\scriptsize] at (-1.53,-2.82) {$\vdots$};
  \node[text=black,font=\scriptsize] at (-0.76,-2.82) {$\vdots$};
  \node[text=black,font=\scriptsize] at (-0.38,-2.82) {$\vdots$};
  \node[text=black,font=\scriptsize] at (0.38,-2.82) {$\vdots$};
  \node[text=black,font=\scriptsize] at (0.76,-2.82) {$\vdots$};
  \node[text=black,font=\scriptsize] at (1.53,-2.82) {$\vdots$};
  \node[text=black,font=\scriptsize] at (1.91,-2.82) {$\vdots$};
  \node[anchor=west,inner sep=1pt] at (0.10,0.04) {$o$};
\end{scope}

\begin{scope}[xshift=10.80cm]
  \node[fwd/title] at (0,3.40) {3. Select the open cluster};
  \node at (0,3.00) {$T$: random};
  \node[fwd/note] at (0,2.25) {Open probability: $p$.};
  \node[fwd/note] at (0,1.80) {Edge states are independent.};
  \node[fwd/note] at (0,1.18) {Bold = the cluster of $o$.};
  \node[fwd/note] at (0,0.73) {Here $T$ has three vertices.};
  \coordinate (o) at (0.00,0.00);
  \coordinate (a) at (-1.15,-0.95);
  \coordinate (b) at (1.15,-0.95);
  \coordinate (aa) at (-1.72,-1.90);
  \coordinate (ab) at (-0.57,-1.90);
  \coordinate (ba) at (0.57,-1.90);
  \coordinate (bb) at (1.72,-1.90);
  \coordinate (aaa) at (-1.91,-2.47);
  \coordinate (aab) at (-1.53,-2.47);
  \coordinate (aba) at (-0.76,-2.47);
  \coordinate (abb) at (-0.38,-2.47);
  \coordinate (baa) at (0.38,-2.47);
  \coordinate (bab) at (0.76,-2.47);
  \coordinate (bba) at (1.53,-2.47);
  \coordinate (bbb) at (1.91,-2.47);
  \draw[fwd/cluster edge] (o)--(a);
  \draw[fwd/closed] (o)--(b);
  \draw[fwd/cluster edge] (a)--(aa);
  \draw[fwd/closed] (a)--(ab);
  \draw[fwd/outside edge] (b)--(ba);
  \draw[fwd/outside edge] (b)--(bb);
  \draw[fwd/closed] (aa)--(aaa);
  \draw[fwd/closed] (aa)--(aab);
  \draw[fwd/outside edge] (ab)--(aba);
  \draw[fwd/closed] (ab)--(abb);
  \draw[fwd/outside edge] (ba)--(baa);
  \draw[fwd/closed] (ba)--(bab);
  \draw[fwd/closed] (bb)--(bba);
  \draw[fwd/outside edge] (bb)--(bbb);
  \node[fwd/vertex] at (o) {};
  \node[fwd/vertex] at (a) {};
  \node[fwd/outside vertex] at (b) {};
  \node[fwd/vertex] at (aa) {};
  \node[fwd/outside vertex] at (ab) {};
  \node[fwd/outside vertex] at (ba) {};
  \node[fwd/outside vertex] at (bb) {};
  \node[fwd/outside vertex] at (aaa) {};
  \node[fwd/outside vertex] at (aab) {};
  \node[fwd/outside vertex] at (aba) {};
  \node[fwd/outside vertex] at (abb) {};
  \node[fwd/outside vertex] at (baa) {};
  \node[fwd/outside vertex] at (bab) {};
  \node[fwd/outside vertex] at (bba) {};
  \node[fwd/outside vertex] at (bbb) {};
  \node[text=black!60,font=\scriptsize] at (-1.91,-2.82) {$\vdots$};
  \node[text=black!60,font=\scriptsize] at (-1.53,-2.82) {$\vdots$};
  \node[text=black!60,font=\scriptsize] at (-0.76,-2.82) {$\vdots$};
  \node[text=black!60,font=\scriptsize] at (-0.38,-2.82) {$\vdots$};
  \node[text=black!60,font=\scriptsize] at (0.38,-2.82) {$\vdots$};
  \node[text=black!60,font=\scriptsize] at (0.76,-2.82) {$\vdots$};
  \node[text=black!60,font=\scriptsize] at (1.53,-2.82) {$\vdots$};
  \node[text=black!60,font=\scriptsize] at (1.91,-2.82) {$\vdots$};
  \node[anchor=west,inner sep=1pt] at (0.10,0.04) {$o$};
\end{scope}

\draw[black!25,line width=0.4pt] (-2.30,-3.15)--(13.10,-3.15);
\node[anchor=east,font=\fontsize{8.5}{10}\selectfont] at (-1.60,-3.50) {(3):};
\draw[fwd/cluster edge] (-1.35,-3.50)--(-0.65,-3.50);
\node[fwd/vertex] at (-1.35,-3.50) {};
\node[fwd/vertex] at (-0.65,-3.50) {};
\node[anchor=west] at (-0.48,-3.50) {open, in $T$};
\draw[fwd/outside edge] (3.55,-3.50)--(4.25,-3.50);
\node[fwd/outside vertex] at (3.55,-3.50) {};
\node[fwd/outside vertex] at (4.25,-3.50) {};
\node[anchor=west] at (4.42,-3.50) {open, outside $T$};
\draw[fwd/closed] (9.80,-3.50)--(10.50,-3.50);
\node[anchor=west] at (10.67,-3.50) {closed};
\node[align=center,font=\fontsize{9}{11}\selectfont] at (5.40,-4.36)
  {An open edge can lie outside $T$: a closed edge may block its path to $o$.\\[3pt]
   Dots denote omitted generations of the ambient tree, not leaves.};
\end{tikzpicture}
\endgroup

\caption{From the regular tree to the open forward cluster, for $k=2$.
The first panel cuts the fixed edge $\{o_-,o\}$ and discards the entire
component above it. The second panel is the deterministic tree $\Tt^+$,
with two children at every vertex. The third panel shows one percolation
configuration: bold edges and filled vertices form $T$, while thin solid
edges are open but disconnected from $o$. The illustrated cluster has
three vertices because all four of its boundary edges are closed.
The definition of $T$ does not condition on survival. Dots mark omitted
generations of the ambient tree, not leaves.}
\label{fig:forward-construction}
\end{figure}

Let $d(x,y)$ be the number of edges in the unique path from $x$
to $y$ in the deterministic tree.
For $x\in V(\Tt^+)$, define its child set by
\[
 \operatorname{Ch}(x)=\{y\in V(\Tt^+):y\sim x,\ d(o,y)=d(o,x)+1\},
 \qquad |\operatorname{Ch}(x)|=k.
\]
The original root cluster $C_o$ from \eqref{eq:cluster} may also use the
edge $\{o_-,o\}$. In particular,
\[
 V(T)\subseteq V(C_o),\qquad
 \#\{y:y\sim o\}=k+1,\qquad |\operatorname{Ch}(o)|=k.
\]
Figure~\ref{fig:forward-construction} shows the passage from the original
percolation model to $\Tt^+$ and $T$.

For each integer $n\geq0$, define the event $\mathcal R_n$, its
probability $s_n$, and the survival probability $s$ by
\[
 \mathcal R_n=\{\exists\,x\in V(T):d(o,x)=n\},\qquad
 s_n=\PP_p(\mathcal R_n),\qquad s=\PP_p(|T|=\infty).
\]
Since each generation is finite,
\[
 \mathcal R_{n+1}\subseteq\mathcal R_n,\qquad
 \bigcap_{n\geq0}\mathcal R_n=\{|T|=\infty\},\qquad
 s_n\downarrow s.
\]
Label the children of $o$ by $o_1,\ldots,o_k$. For $1\leq i\leq k$,
set $\omega_i=\omega_{oo_i}$ and let $T_i$ be the open cluster rooted
at $o_i$ after deleting $\{o,o_i\}$, regardless of $\omega_i$.
Let $\mathcal R_n^{(i)}$ be the event that $T_i$ reaches distance $n$
below $o_i$. Then
\[
 \mathcal R_{n+1}
 =\bigcup_{i=1}^k\bigl(\{\omega_i=1\}\cap\mathcal R_n^{(i)}\bigr),
 \qquad
 \PP_p(\omega_i=1,\mathcal R_n^{(i)})=ps_n.
\]
The $k$ branch events are independent, so $1-s_{n+1}=\prod_{i=1}^k(1-ps_n)=(1-ps_n)^k.$
Define $\varphi(u)=1-(1-pu)^k$ for $u\in[0,1]$, and let
$\eta=1-ps$ be the probability that a given child position does not
produce an infinite open branch. Then
\begin{equation}\label{eq:forwardinfinite}
 \begin{gathered}
 s_0=1,\qquad s_{n+1}=\varphi(s_n),\qquad s_n\downarrow s,\\
 s=1-(1-ps)^k,\qquad \eta=1-ps,\qquad 1-s=\eta^k.
 \end{gathered}
\end{equation}
The forward offspring distribution is $\Bin(k,p)$, with mean $kp$.
The Galton--Watson extinction criterion
\cite[Chapter~5]{LyonsPeres}, also recalled in
\cite[Section~2.2]{Bordenave}, therefore gives
\[
 s>0\quad\Longleftrightarrow\quad kp>1,\qquad
 p_c(\Tt_{k+1})=\frac1k.
\]
Returning to $p\in(1/k,1)$, each of the $k+1$ directions from the root fails
to produce an infinite branch with probability $\eta$. Independence gives
\begin{equation}\label{eq:fullinfinite}
 \PP_p(|C_o|<\infty)=\prod_{i=1}^{k+1}\eta=\eta^{k+1}\text{ and therefore }
 \PP_p(|C_o|=\infty)=1-\eta^{k+1}>0.
\end{equation}
For a possible child with edge state $\omega_i$ and forward cluster
$T_i$, the two ways to avoid an infinite branch are disjoint:
\[
 \begin{aligned}
 \eta
 &=\PP_p(\omega_i=0)+\PP_p(\omega_i=1,\ |T_i|<\infty)=(1-p)+p(1-s)=1-ps.
 \end{aligned}
\]
Bordenave's skeleton decomposition separates a supercritical
Galton--Watson tree into infinite descendants and extinct finite branches
\cite[Section~2.2]{Bordenave}. More precisely, his equation~(7) gives the
joint law, conditioned on survival, of the numbers of infinite and extinct
open child branches; equation~(8) gives the offspring law inside an extinct
branch. His equations~(9)--(10) then insert this decomposition into the root
resolvent recursion. We use the same decomposition for the binomial
percolation tree, but keep all $k$ deterministic child positions: a closed
position is represented by an empty finite branch. The next lemma derives
this specialization directly and computes the two finite configurations
used later.

\begin{lemma}\label{lem:conditioning}
Let $K$ count the infinite open child branches of the root. Then
\begin{equation*}
 \pi_j:=\PP(K=j\mid |T|=\infty)
 =\frac{\binom{k}{j}(ps)^j\eta^{k-j}}{s},\qquad 1\leq j\leq k,
\end{equation*}
\[
 \E[K\mid |T|=\infty]=kp.
\]
\begin{equation}\label{eq:qbound2}
 0<q:=\PP(K\geq2\mid |T|=\infty)
 =\sum_{j=2}^k\pi_j\leq kp-1.
\end{equation}
For a child position $i$, set
\[
 B_i=\begin{cases}T_i,&\omega_i=1,\\ \varnothing,&\omega_i=0,\end{cases}
 \qquad
 \mathcal L(F)=\mathcal L(B_i\mid |B_i|<\infty).
\]
Given $K=j$, relabel the branches so that the infinite ones come first.
Then
\[
 \mathcal L(B'_1,\ldots,B'_k\mid K=j)
 =\mathcal L(T\mid |T|=\infty)^{\otimes j}
  \otimes\mathcal L(F)^{\otimes(k-j)},
\]
and
\begin{equation*}
 \PP(F=\varnothing)=\frac{1-p}{\eta},\qquad
 \PP(|F|=1)=\frac{p(1-p)^k}{\eta}.
\end{equation*}
\end{lemma}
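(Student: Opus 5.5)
The plan is to treat the $k$ child positions of the root as i.i.d.\ pairs $(\omega_i,B_i)$. For each position $i$, the event that it produces an infinite open branch is $\{\omega_i=1,\ |T_i|=\infty\}=\{|B_i|=\infty\}$, with probability $ps=1-\eta$. The pairs are independent because they use disjoint edge sets, and each $T_i$ has the law of $T$ by translation invariance of the product measure along the subtree rooted at $o_i$. Since $\{|T|=\infty\}=\{K\geq1\}$, we have $\PP(K=j)=\binom kj(ps)^j\eta^{k-j}$. Dividing by $s$ gives $\pi_j$ for $j\geq1$.

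For the mean, first compute $\E K=kps$. Since $K=0$ off survival, $\E[K\mid|T|=\infty]=kps/s=kp$. For $q$, write
\[
 \E[K\mid |T|=\infty]=\sum_j j\pi_j\geq \pi_1+2q=(1-q)+2q=1+q,
\]
hence $q\leq kp-1$. Positivity of $q$ follows from $\pi_2>0$, which uses $k\geq2$, $ps>0$ and $\eta>0$ (note $\eta\ge 1-p>0$ for $p<1$).

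For the conditional product structure, fix a set $I\subseteq\{1,\dots,k\}$ with $|I|=j\geq1$. On $\{K=j\}\cap\{\text{infinite positions}=I\}$, independence across positions factorizes the joint law. The positions in $I$ contribute $\mathcal L(B_i\mid |B_i|=\infty)$. On that event $\omega_i=1$, so $B_i=T_i$, and conditioning $T_i$ on infinitude gives $\mathcal L(T\mid|T|=\infty)$. The remaining positions contribute $\mathcal L(B_i\mid|B_i|<\infty)=\mathcal L(F)$. The conditional law given $I$ does not depend on $I$. Relabelling with a fixed deterministic rule (infinite ones first, preserving order) and averaging over $I$ therefore yields the stated product law given $K=j$.

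Finally, $\{F=\varnothing\}$ corresponds to $\{\omega_i=0\}$. Its probability is $(1-p)$, and dividing by $\PP(|B_i|<\infty)=\eta$ gives the first formula. The event $\{|F|=1\}$ is $\{\omega_i=1\}$ together with all $k$ child edges of $o_i$ closed, which has probability $p(1-p)^k$ and is contained in $\{|B_i|<\infty\}$; dividing by $\eta$ gives the second formula. The only delicate point is stating the relabelling measurably and checking that conditioning on $\{K=j\}$ rather than on the specific index set loses nothing, which follows from the $I$-independence noted above. The rest is bookkeeping with Bernoulli products.
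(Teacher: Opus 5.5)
Your proposal is correct and follows the paper's proof essentially step for step: you get $K\sim\Bin(k,ps)$ from independence of the child positions, then $\pi_j=\PP(K=j)/s$ and $\E[K\mid |T|=\infty]=kp$; your bound $\pi_1+2q\le\sum_j j\pi_j$ is the paper's inequality $\mathbf 1_{\{K\ge2\}}\le K-1$ on survival, and you obtain both $F$-probabilities by dividing by $\PP(|B_i|<\infty)=\eta$. For the product structure, the paper conditions on the indicator vector $(X_1,\ldots,X_k)$ and notes that the result depends only on $j$, which is exactly your conditioning on the index set $I$ followed by averaging over $I$.
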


\begin{proof}
We derive the binomial specialization directly. With $B_i$ as in the
statement, set
\[
 X_i:=\mathbf1_{\{|B_i|=\infty\}},\qquad 1\leq i\leq k.
\]
Independence of the child branches gives
\[
 \begin{gathered}
 \PP(X_i=1)=ps,\qquad K=\sum_{i=1}^kX_i\sim\Bin(k,ps),\\
 \mathcal S:=\{|T|=\infty\}=\{K\geq1\},\qquad \PP(\mathcal S)=s.
 \end{gathered}
\]
Hence, for $1\leq j\leq k$,
\[
 \begin{gathered}
 \pi_j=\frac{\PP(K=j)}s
       =\frac{\binom{k}{j}(ps)^j\eta^{k-j}}s,
 \qquad \E[K\mid\mathcal S]=\frac{kps}{s}=kp,\\
 0<\pi_2\leq q=\E[\mathbf1_{\{K\geq2\}}\mid\mathcal S]
       \leq\E[K-1\mid\mathcal S]=kp-1.
 \end{gathered}
\]
For $x_1,\ldots,x_k\in\{0,1\}$, independence also gives
\[
 \mathcal L(B_1,\ldots,B_k\mid X_1=x_1,\ldots,X_k=x_k)
 =\bigotimes_{i=1}^k\mathcal L(B_i\mid X_i=x_i).
\]
Relabel the branches as $B'_1,\ldots,B'_k$, listing infinite positions
first and preserving the order within each group. The product above
then depends only on $j=\sum_i x_i$, so
\[
 \mathcal L(B'_1,\ldots,B'_k\mid K=j)
 =\mathcal L(T\mid |T|=\infty)^{\otimes j}
   \otimes\mathcal L(F)^{\otimes(k-j)},
 \qquad \mathcal L(F)=\mathcal L(B_i\mid X_i=0).
\]
Finally,
\begin{align*}
 \PP(F=\varnothing)
 &=\frac{\PP(\omega_i=0)}{\PP(X_i=0)}=\frac{1-p}{\eta},\\
 \PP(|F|=1)
 &=\frac{\PP(\omega_i=1,\ |T_i|=1)}{\PP(X_i=0)}
   =\frac{p(1-p)^k}{\eta}.\qedhere
\end{align*}
\end{proof}

\subsection{The root resolvent and its conditional distribution}\label{sec:grushin-law}
By \eqref{eq:model}, every operator considered here has spectrum in
\begin{equation}\label{eq:energyinterval}
 I_k=[-(k+1),k+1].
\end{equation}
For a rooted tree $X$ of maximum degree at most $k+1$, with root $o_X$
and adjacency operator $A_X$, define its diagonal resolvent by
\[
 \Gamma_X(z)=\langle\delta_{o_X},(A_X-z)^{-1}\delta_{o_X}\rangle,
 \qquad z\in\Hh.
\]
For the forward cluster, this reads
\[ \Gamma_T(z)=\langle\delta_o,(A_T-z)^{-1}\delta_o\rangle,
 \qquad z\in\Hh.
\]
To check measurability on a fixed space, extend $A_T$ by zero outside
 $V(T)$ in the deterministic forward tree. Thus, on $\ell^2(V(\Tt^+))$,
\[
 \widetilde A_T=A_T\oplus0,\qquad
 \langle\delta_o,(\widetilde A_T-z)^{-1}\delta_o\rangle
 =\Gamma_T(z).
\]
Whether a vertex belongs to the cluster is measurable by the
 open-path formula in Section~\ref{sec:notation}. The extended
 operator therefore has measurable matrix entries. We use the same convention for all forward clusters when
using the boundary-value argument in Section~\ref{sec:boundaryfacts}.

To isolate the value at the root, define the restriction and extension
maps
\[
 R_{+,X}:\ell^2(V(X))\longrightarrow\mathbb C,\quad R_{+,X}u=u(o_X),
 \qquad
 R_{-,X}:\mathbb C\longrightarrow\ell^2(V(X)),\quad R_{-,X}c=c\delta_{o_X}.
\]
We add one scalar unknown and one equation fixing the value at the
 root. This enlarged linear system is called a \emph{Grushin problem}.
 Its operator is
\begin{equation}\label{eq:grushin-root}
 \mathcal P_X(z)=
 \begin{pmatrix}A_X-z&R_{-,X}\\ R_{+,X}&0\end{pmatrix}
 :\ell^2(V(X))\oplus\mathbb C\longrightarrow\ell^2(V(X))\oplus\mathbb C.
\end{equation}
For $z\in\Hh$ this operator is invertible. Write $E_{-+,X}(z)$
 for the scalar in the lower-right corner of its inverse. This scalar
 is usually called the \emph{effective Hamiltonian}; see
 \cite[Section~1]{SjostrandZworski}.

We now compute the inverse of \eqref{eq:grushin-root} for any
nonempty rooted tree $X$ of maximum degree at most $k+1$. Let $o_X$ be
its root. Removing $o_X$ leaves $n$ components $X_1,\ldots,X_n$, where
$0\leq n\leq k+1$; let $o_i$ be the vertex of $X_i$ adjacent to $o_X$.
We regard $X_i$ as rooted at $o_i$. Separate the root coordinate from the coordinates on the branches:
\[
 \mathscr H_X=\ell^2(V(X)),\qquad
 \mathscr K_X=\bigoplus_{i=1}^n\ell^2(V(X_i)),\qquad
 \mathscr H_X=\mathbb C\delta_{o_X}\oplus\mathscr K_X
              \simeq\mathbb C\oplus\mathscr K_X.
\]
In this decomposition, write $u=(u_0,u^\circ)$, with $u_0=u(o_X)$ and
$u^\circ$ the collection of its restrictions to the branches. Define the
branch adjacency operator $B_X$ and root coupling $b_X$ by
\[
 B_X=\bigoplus_{i=1}^n A_{X_i},\qquad
 b_X:\mathbb C\longrightarrow\mathscr K_X,\quad
 b_Xc=(c\delta_{o_i})_{i=1}^n.
\]
For $u^\circ=(u_1,\ldots,u_n)\in\mathscr K_X$,
\[
 b_X^*u^\circ=\sum_{i=1}^n u_i(o_i),\qquad
 \|b_X\|=\sqrt n,\qquad
 A_X=\begin{pmatrix}0&b_X^*\\ b_X&B_X\end{pmatrix}.
\]
For $z\in\Hh$, define the branch resolvent
\[
 \mathcal R_X(z)=(B_X-z)^{-1}
       =\bigoplus_{i=1}^n(A_{X_i}-z)^{-1},\qquad
 \|\mathcal R_X(z)\|\leq(\Imn z)^{-1}.
\]
If $n=0$, then $\mathscr K_X=\{0\}$ and all branch blocks below are zero.

With the root maps $R_{\pm,X}$ defined above, the Grushin
operator acts on $\mathbb C\oplus\mathscr K_X\oplus\mathbb C$ as
\[
 \mathcal P_X(z)=
 \begin{pmatrix}
  -z&b_X^*&1\\
  b_X&B_X-z&0\\
  1&0&0
 \end{pmatrix}.
\]
To invert this operator, let $f=(f_0,f^\circ)\in\mathscr H_X$ and
$f_+\in\mathbb C$ be the prescribed data, and let
$(u,u_-)\in\mathscr H_X\oplus\mathbb C$ be the unknown. The equation
$\mathcal P_X(z)(u,u_-)=(f,f_+)$ reads
\[
 -zu_0+b_X^*u^\circ+u_-=f_0,\qquad
 b_Xu_0+(B_X-z)u^\circ=f^\circ,\qquad u_0=f_+.
\]
These equations have the unique solution
\[
 \begin{aligned}
  u_0&=f_+,\\
  u^\circ&=\mathcal R_X(z)f^\circ-\mathcal R_X(z)b_Xf_+,\\
  u_-&=f_0-b_X^*\mathcal R_X(z)f^\circ
                 +\bigl(z+b_X^*\mathcal R_X(z)b_X\bigr)f_+.
 \end{aligned}
\]
Therefore the lower-right scalar block of the inverse is
\begin{equation}\label{eq:grushin-effective}
 E_{-+,X}(z)=z+b_X^*\mathcal R_X(z)b_X
           =z+\sum_{i=1}^n\Gamma_{X_i}(z),
\end{equation}
and the full inverse is
\begin{equation}\label{eq:grushin-inverse}
 \mathcal P_X(z)^{-1}=
 \begin{pmatrix}
  0&0&1\\
  0&\mathcal R_X(z)&-\mathcal R_X(z)b_X\\
  1&-b_X^*\mathcal R_X(z)&E_{-+,X}(z)
 \end{pmatrix}.
\end{equation}
These formulas give a unique solution, and every block is bounded.
 Thus the enlarged operator has a bounded inverse even when some
 branches are infinite. Moreover, the scalar resolvents have positive imaginary
part, so
\[
 \Imn E_{-+,X}(z)=\Imn z+\sum_{i=1}^n\Imn\Gamma_{X_i}(z)>0.
\]
In particular, $E_{-+,X}(z)\ne0$ for $z\in\Hh$.

Write the remaining blocks of the inverse, with respect to
$\mathscr H_X\oplus\mathbb C$, by
$E_X(z):\mathscr H_X\to\mathscr H_X$,
$E_{+,X}(z):\mathbb C\to\mathscr H_X$, and
$E_{-,X}(z):\mathscr H_X\to\mathbb C$. Equation~\eqref{eq:grushin-inverse}
gives
\[
 E_X(z)=\begin{pmatrix}0&0\\0&\mathcal R_X(z)\end{pmatrix},\qquad
 E_{+,X}(z)=\begin{pmatrix}1\\-\mathcal R_X(z)b_X\end{pmatrix},\qquad
 E_{-,X}(z)=\begin{pmatrix}1&-b_X^*\mathcal R_X(z)\end{pmatrix}.
\]
The inverse equations are
\[
 u=E_X(z)f+E_{+,X}(z)f_+,\qquad
 u_-=E_{-,X}(z)f+E_{-+,X}(z)f_+.
\]
To recover $(A_X-z)u=f$, require $u_-=0$ and choose
$f_+=-E_{-+,X}(z)^{-1}E_{-,X}(z)f$. This proves the Grushin resolvent
formula \cite[Section~1, equation~(1.1)]{SjostrandZworski}:
\begin{equation}\label{eq:grushin-recovery}
 \begin{aligned}
 (A_X-z)^{-1}
   &=E_X(z)-E_{+,X}(z)E_{-+,X}(z)^{-1}E_{-,X}(z),\\
 \Gamma_X(z)&=-E_{-+,X}(z)^{-1},\qquad
 \Gamma_X(z)E_{-+,X}(z)=-1.
 \end{aligned}
\end{equation}
The second line follows by taking the root matrix element: the root
entry of $E_X(z)$ is zero, and the root entries of $E_{+,X}(z)$ and
$E_{-,X}(z)$ are both one. In the one-vertex case this gives
$E_{-+,X}(z)=z$ and $\Gamma_X(z)=-z^{-1}$.

Apply \eqref{eq:grushin-effective}--\eqref{eq:grushin-recovery} with
$X=T$, and label the clusters reached by the $n$ open child edges as
$T_1,\ldots,T_n$. We obtain
\begin{equation}\label{eq:recursion}
 \Gamma_T(z)=-\left(z+\sum_{i=1}^n\Gamma_{T_i}(z)\right)^{-1},\qquad
 \Gamma_T(z)\left(z+\sum_{i=1}^n\Gamma_{T_i}(z)\right)=-1.
\end{equation}
This is the recursion in \cite[Section~2.1, equation~(6)]{Bordenave},
now obtained from the root Grushin problem. We use the product form when taking limits at real energies.
 Earlier uses of this recursion in localization and quantum percolation
 appear in
\cite{AbouChacraThoulessAnderson,Harris}.
Let $\mu_T$ be the spectral measure of $A_T$ at $\delta_o$.
Then
\begin{equation*}
 \Gamma_T(z)=\int_{I_k}\frac{\dd\mu_T(\lambda)}{\lambda-z},\qquad
 \Imn\Gamma_T(z)=\Imn z\int_{I_k}\frac{\dd\mu_T(\lambda)}{|\lambda-z|^2}>0.
\end{equation*}

\medskip\noindent\textbf{Boundary values.}
Let $\mathcal F_k$ contain one representative of each nonempty finite
rooted tree of maximum degree at most $k+1$, up to relabelling that
preserves the root, and set
\[
 \Sigma_{\rm fin}=\bigcup_{F\in\mathcal F_k}\Spec(A_F).
\]
For an oriented edge $y\to x$, let $T_{y\to x}$ be the open cluster
rooted at $y$ after deleting $\{x,y\}$. Write
\[
 G_x(z)=\langle\delta_x,(A_\omega-z)^{-1}\delta_x\rangle,\qquad
 \Gamma_{y\to x}(z)=\Gamma_{T_{y\to x}}(z),\qquad z\in\Hh.
\]
The corresponding adjacency operators form a countable family with norms
at most $k+1$, and $\Sigma_{\rm fin}$ is countable. Section~\ref{sec:boundaryfacts}
therefore gives a deterministic Borel set $\mathcal E_p\subset I_k$ with
\begin{equation}\label{eq:goodenergies}
 |I_k\setminus\mathcal E_p|=0,\qquad
 \mathcal E_p\cap\Sigma_{\rm fin}=\varnothing,
\end{equation}
such that, for each $E\in\mathcal E_p$, all $G_x(E+i0)$ and
$\Gamma_{y\to x}(E+i0)$ are finite on one probability-one set.
The one-vertex tree gives $0\in\Sigma_{\rm fin}$, hence $0\notin\mathcal E_p$.

Fix $E\in\mathcal E_p$; all limits below are taken on this common
probability-one set, and undefined boundary values are set to zero.
For the finite or empty branch $F$ of Lemma~\ref{lem:conditioning},
with root $o_F$ when nonempty, define
\begin{equation}\label{eq:decorationenergy}
 m_F(E)=
 \begin{cases}
 0,&F=\varnothing,\\
 \langle\delta_{o_F},(A_F-E)^{-1}\delta_{o_F}\rangle,&F\ne\varnothing,
 \end{cases}
 \qquad Y(E)=m_F(E)\in\R.
\end{equation}
These values are finite and real because $E\notin\Sigma_{\rm fin}$.
The role of Bordenave's Section~2.2 at this point is structural, not a
boundary-value argument. In his notation, equation~(9) separates the
resolvents of the infinite skeleton branches from the finite contribution
$V(z)$, and equation~(10) writes $V(z)$ as the sum of the resolvents of the
extinct child branches \cite[Section~2.2, equations~(9)--(10)]{Bordenave}.
The existence of the boundary values used here instead comes from
Section~\ref{sec:boundaryfacts}. Our Grushin identities allow us to pass to
those scalar boundary values without inverting any branch operator at the
real energy. Taking scalar limits in
\eqref{eq:grushin-effective} and \eqref{eq:grushin-recovery} gives
\[
 E_{-+,T}(E+i0)=E+\sum_{i=1}^n\Gamma_{T_i}(E+i0),\qquad
 \Gamma_T(E+i0)E_{-+,T}(E+i0)=-1.
\]
Thus $E_{-+,T}(E+i0)\ne0$. 

\medskip\noindent\textbf{Conditioning on survival.}
Work under $\PP_p(\cdot\mid |T|=\infty)$ and put
$W=\Gamma_T(E+i0)$. For each $1\leq j\leq k$, take independent copies
$W_1,\ldots,W_j$ of $W$ and $Y_1(E),\ldots,Y_{k-j}(E)$ of $Y(E)$,
with the two families being independent. Lemma~\ref{lem:conditioning} and
\eqref{eq:recursion} give the conditional recursion below. It is the
fixed-$k$ binomial version of Bordenave's equations~(9)--(10): conditional
on $K=j$, his $N'_s$ equals $j$, the variables $G_x^s$ become the copies
$W_1,\ldots,W_j$, and his finite contribution $V$ becomes $D_j$. Our
$D_j$ also includes closed child positions, each of which contributes zero
through \eqref{eq:decorationenergy}:
\begin{equation}\label{eq:boundaryrec}
 \begin{gathered}
 \mathcal L(W\mid K=j)
 =\mathcal L\bigl(g_{E+D_j}(W_1+\cdots+W_j)\bigr),\\
 D_j=\sum_{r=1}^{k-j}Y_r(E),\qquad D_k=0,\qquad
 g_t(w)=-\frac1{t+w}.
 \end{gathered}
\end{equation}
Closed edges contribute zero by \eqref{eq:decorationenergy}; the preceding
product identity ensures that the denominator is nonzero almost surely.

For a probability measure $\mu$ on $\Hh$, define $P$ and $Q$ by
\begin{equation}\label{eq:PQ}
 P\mu=\E(g_{E+D_1})_\#\mu,\qquad
 Q\mu=\sum_{j=2}^k\frac{\pi_j}{q}
              \E(g_{E+D_j})_\#(\mu^{*j}).
\end{equation}
Here the expectations average over the finite branches in $D_j$:
$P$ corresponds to one infinite child, and $Q$ to at least two.

\begin{lemma}\label{lem:zeroone}
For every $E\in\mathcal E_p$, either $W\in\R$ almost surely, or its law $\nu$ is a probability measure on $\Hh$ satisfying
\begin{equation}\label{eq:fixedpoint}
 \nu=(1-q)P\nu+qQ\nu.
\end{equation}
\end{lemma}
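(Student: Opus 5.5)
The plan is to reduce the dichotomy to a zero--one law for the event $\{\Imn W>0\}$, driven by the conditional recursion \eqref{eq:boundaryrec}, and then read off \eqref{eq:fixedpoint} directly from that recursion. Throughout, fix $E\in\mathcal E_p$ and work on the common probability-one set from \eqref{eq:goodenergies}.

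\emph{Step 1: preliminary facts.} First, $\Imn W\ge0$ almost surely, since $W$ is a limit of $\Gamma_T(E+i\tau)\in\Hh$. Second, for a finite branch $F$, the boundary value $\Gamma_F(E+i0)$ equals the real number $m_F(E)$. This holds because $E\notin\Sigma_{\rm fin}$, so the resolvent of the finite matrix $A_F$ is continuous at $E$; hence $Y(E)\in\R$.

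Third, each infinite child branch is a copy of $T$. The subtree of $\Tt^+$ below $o_i$ is isomorphic to $\Tt^+$, and the configuration restricted to it has the same product law. So, conditional on $|B_i|=\infty$, the boundary value $\Gamma_{T_i}(E+i0)$ has the law of $W$ under $\PP(\cdot\mid|T|=\infty)$. Combined with Lemma~\ref{lem:conditioning}, this justifies \eqref{eq:boundaryrec}: conditional on $K=j$, the quantities $W_1,\dots,W_j,Y_1,\dots,Y_{k-j}$ are independent with the stated laws. The product identity ensures that $E+D_j+W_1+\cdots+W_j\neq0$ almost surely.

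\emph{Step 2: imaginary parts propagate.} Write $S=W_1+\cdots+W_j$. From $W=-(E+D_j+S)^{-1}$ with $E+D_j$ real, we get
\[
 \Imn W=\frac{\Imn S}{|E+D_j+S|^2}.
\]
Hence $\Imn W>0$ if and only if $\Imn W_i>0$ for some $i\le j$.

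Let $a=\PP(\Imn W>0\mid|T|=\infty)$ and $x=1-a$. Using conditional independence given $K$ and the equality of laws from Step 1,
\[
 x=\sum_{j=1}^k\pi_j x^j=:\phi(x).
\]
The function $\phi$ is a polynomial with nonnegative coefficients, and it satisfies $\phi(0)=0$ and $\phi(1)=1$. It is strictly convex on $[0,1]$, because $\pi_j>0$ for some $j\ge2$ (indeed $q>0$ by \eqref{eq:qbound2}). A strictly convex function agreeing with the identity at $0$ and $1$ lies strictly below it on $(0,1)$, so $\phi(x)<x$ there. Therefore $x\in\{0,1\}$.

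If $x=1$, then $W\in\R$ almost surely, which is the first alternative. If $x=0$, then $W\in\Hh$ almost surely, so its law $\nu$ is a probability measure on $\Hh$.

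\emph{Step 3: the fixed-point equation.} In the case $x=0$, decompose the law of $W$ over the values of $K$:
\[
 \nu=\sum_{j=1}^k\pi_j\,\mathcal L\bigl(g_{E+D_j}(W_1+\cdots+W_j)\bigr).
\]
Since the $W_i$ are independent with law $\nu$ and independent of $D_j$, the $j$-th term equals $\E(g_{E+D_j})_\#(\nu^{*j})$. Here $\nu^{*j}$ lives on $\Hh$, and each $g_t$ with real $t$ lies in $\operatorname{PSL}_2(\R)$, so it preserves $\Hh$. Using $\pi_1=1-q$ and the definitions in \eqref{eq:PQ}, the $j=1$ term is $(1-q)P\nu$ and the terms with $j\ge2$ sum to $qQ\nu$. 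This gives \eqref{eq:fixedpoint}.

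The main obstacle is the rigorous justification in Step 1. One must check that the boundary values of the child clusters, conditioned on survival of their branches, are genuinely independent copies of $W$ and are independent of the finite decorations $Y_r$. This has to hold on the single common probability-one set, not merely for $z\in\Hh$. I would handle it in three moves: pass to the limit $\tau\downarrow0$ in the joint law given by Lemma~\ref{lem:conditioning}, using the measurability established in Section~\ref{sec:boundaryfacts}; note that the boundary values are measurable functions of disjoint sets of edge variables; and invoke automorphism invariance of the product measure on $\Tt^+$.
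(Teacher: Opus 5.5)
Your proof is correct and follows the paper's argument: you use the same fixed-point equation $\theta=\sum_{j}\pi_j\theta^j$ for the probability that $W$ is real, with your strict-convexity argument in place of the paper's explicit bound $\sum_{j\geq2}\pi_j\theta(1-\theta^{j-1})\geq q\theta(1-\theta)$, and then the same decomposition over $K$ to obtain the fixed-point equation. The justification of \eqref{eq:boundaryrec} that you flag as the main obstacle is not part of the paper's proof of this lemma, which takes that recursion as already established before the lemma.
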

\begin{proof}
The spectral representation gives $\Imn W\geq0$. The finite-branch contributions
 are real and the denominator at the boundary is nonzero, so
\eqref{eq:boundaryrec} implies, conditional on $K=j$,
\[
 \Imn W\overset d=
 \frac{\sum_{i=1}^j\Imn W_i}{|E+D_j+\sum_{i=1}^jW_i|^2}.
\]
Write $\theta=\PP(\Imn W=0)$. Independence and nonnegativity give
$\PP(\Imn W=0\mid K=j)=\theta^j$. Therefore
\begin{equation}\label{eq:zeroone}
 \theta=\sum_{j=1}^k\pi_j\theta^j,\qquad
 0=\sum_{j=2}^k\pi_j\theta(1-\theta^{j-1}).
\end{equation}
If $0<\theta<1$, the right-hand side is at least
$q\theta(1-\theta)>0$, contradicting \eqref{eq:zeroone}.
Thus $\theta\in\{0,1\}$. The case $\theta=1$ gives real $W$.
If $\theta=0$, its distribution $\nu$ is supported in $\Hh$, and
\eqref{eq:boundaryrec} and \eqref{eq:PQ} give
\[
 \nu=\sum_{j=1}^k\pi_j\E(g_{E+D_j})_\#(\nu^{*j})
     =(1-q)P\nu+qQ\nu.\qedhere
\]
\end{proof}

\medskip\noindent\textbf{Estimates for the averaging operators.}
Let $\zeta$ be the distribution of $g_{E+D_1}$ on
$\operatorname{PSL}_2(\R)$. On $L^2(\Hh,m)$, use $P$ also for the
average of the operators $T_g$ from \eqref{eq:smoothing}.
Area preservation makes this the same action on densities. For a
probability measure $\mu$ on $\Hh$ and $t>0$, \eqref{eq:commutation} gives
\begin{equation}\label{eq:PonL2}
 \begin{gathered}
 Pf=\int T_gf\dd\zeta(g)=\E T_{g_{E+D_1}}f,\\
 \norm P_{2\to2}\leq1,\qquad S_t(P\mu)=P(S_t\mu).
 \end{gathered}
\end{equation}
Indeed, finite branches have only countably many possibilities. Thus
$\zeta=\sum_n c_n\delta_{g_n}$, where $c_n\geq0$ and $\sum_n c_n=1$,
and the integral is an $L^2$-convergent sum since
\[
 \sum_n\|c_nT_{g_n}f\|_2=\sum_n c_n\|f\|_2=\|f\|_2.
\]
The same commutation identity and the convolution bound
\eqref{eq:jsum} give, for any probability measure $\nu$ on $\Hh$,
\begin{equation}\label{eq:Qbound}
 \begin{aligned}
 S_R(Q\nu)
 &=\sum_{j=2}^k\frac{\pi_j}{q}
       \E\bigl[T_{g_{E+D_j}}S_R(\nu^{*j})\bigr] \text{ and }
 \norm{S_R(Q\nu)}_2
 \leq\sum_{j=2}^k\frac{\pi_j}{q}\norm{S_R(\nu^{*j})}_2
 \leq B_k\norm{S_R\nu}_2.
 \end{aligned}
\end{equation}

\subsection{Two finite decorations and a uniform contraction}
The second main estimate is the strict contraction coming from only two
finite decorations. Their probabilities and Lemma~\ref{lem:relativegap}
give the uniform two-step bound below. The constants are kept explicit
only to produce a positive interval in Theorem~\ref{thm:tree}; no attempt
is made to optimize them.

Define the constants
\begin{equation}\label{eq:contractionconstants}
 N_k=3(k+1),\qquad
 \gamma_k=\frac{k-1}{k}\left(1-\frac{3}{2k}\right)^{5(k-1)},\qquad
 \rho_k=\sqrt{1-\gamma_k\frac{4-2\sqrt3}{N_k^2}}\in(0,1).
\end{equation}
\begin{proposition}\label{prop:contraction}
Let $1/k<p\leq3/(2k)$ and $E\in\mathcal E_p$. With $\rho_k$
defined in \eqref{eq:contractionconstants}, the operator $P$ in
\eqref{eq:PonL2} satisfies, for every $f\in L^2(\Hh,m)$,
\begin{equation}\label{eq:Pcontraction}
 \norm{P^2f}_2\leq\rho_k\norm f_2.
\end{equation}
\end{proposition}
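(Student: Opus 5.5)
The plan is to exploit the fact that $P^2$ is an average of unitaries $T_g$. For such averages the squared norm defect is controlled by pairwise differences, and Lemma~\ref{lem:relativegap} bounds those differences from below. Write $\zeta=\sum_n c_n\delta_{g_n}$ as in \eqref{eq:PonL2}, after merging coinciding group elements. Then $P^2=\sum_{n,m}c_nc_mT_{g_ng_m}$, again a countable convex combination of unitaries. For any convex combination $U=\sum_\alpha b_\alpha V_\alpha$ of unitaries, expanding $\|V_\alpha f-V_\beta f\|^2$ gives the identity
\[
 \|Uf\|_2^2=\|f\|_2^2-\frac12\sum_{\alpha,\beta}b_\alpha b_\beta\|V_\alpha f-V_\beta f\|_2^2 .
\]
Since all terms are nonnegative, I will keep only a few ordered pairs, each appearing twice in the symmetric double sum.

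The two finite decorations are as follows. Here $D_1=\sum_{r=1}^{k-1}Y_r(E)$. If every finite branch is empty, then $D_1=0$ and the map is $g_E$. If exactly one branch is a single vertex and the others are empty, then by \eqref{eq:decorationenergy} that vertex contributes $m_F(E)=-1/E$. Since $E\in\mathcal E_p$ we have $E\neq0$, so this is finite, and the map is $g_{E-1/E}=g_E\circ h_E$ with $h_E(w)=w-E^{-1}$ as in Lemma~\ref{lem:relativegap}. By Lemma~\ref{lem:conditioning} and independence, these occur with probabilities
\[
 a_0=\Bigl(\tfrac{1-p}{\eta}\Bigr)^{k-1},\qquad a_1=(k-1)\frac{p(1-p)^k}{\eta}\Bigl(\tfrac{1-p}{\eta}\Bigr)^{k-2}.
\]
Write $g=g_E$ and $h=h_E$. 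Two comparisons are used.
\begin{itemize}
 \item Compare the word $gg$ (weight $a_0^2$) with $g(gh)$ (weight $a_0a_1$). This gives
 \[
 \|T_{gg}f-T_{ggh}f\|_2=\|f-T_hf\|_2 .
 \]
 \item Compare $gg$ with $(gh)g$ (weight $a_1a_0$). This gives
 \[
 \|T_{gg}f-T_{ghg}f\|_2=\|T_{gg}(f-T_{g^{-1}hg}f)\|_2=\|T_{\ell_E}f-f\|_2 .
 \]
\end{itemize}
Here $\ell_E=g^{-1}hg$ is exactly the map of Lemma~\ref{lem:relativegap}. If different pairs of words coincide as group elements, merging them only increases the weights, so the lower bound still holds.

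Combining these gives
\[
 \|P^2f\|_2^2\leq\|f\|_2^2-a_0^3a_1\bigl(\|T_{h_E}f-f\|_2^2+\|T_{\ell_E}f-f\|_2^2\bigr).
\]
I then apply Lemma~\ref{lem:relativegap} with $D=k+1\geq3$. This is allowed because $0<|E|\leq k+1$: $E\in I_k$ and $E\neq0$. Since $N_D=3(k+1)=N_k$, the bracket is at least $\frac{4-2\sqrt3}{N_k^2}\|f\|_2^2$.

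It remains to show $a_0^3a_1\geq\gamma_k$. Using $\eta\leq1$, we get
\[
 a_0^3a_1\geq(k-1)p(1-p)^{3(k-1)+k+(k-2)}=(k-1)p(1-p)^{5(k-1)}.
\]
Then $p>1/k$ and $1-p\geq1-\frac3{2k}$ give exactly $\gamma_k$. Taking square roots yields \eqref{eq:Pcontraction} with $\rho_k$ as in \eqref{eq:contractionconstants}.

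The main obstacle is the bookkeeping. One must pick the right ordered pairs so that the conjugated element is $g^{-1}hg$, rather than $ghg^{-1}$. One must also make sure the convex-combination identity remains valid with countably many terms and possible coincidences among the words $g_ng_m$. Both points are handled by grouping terms by group element and using $L^2$-convergence of the sums, as in \eqref{eq:PonL2}.
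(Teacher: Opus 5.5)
Your proposal is correct and follows the paper's proof essentially step for step. It uses the same two decorations (all side branches empty, or exactly one single vertex), the same three two-step words $g_E^2$, $g_E^2h_E$, $g_Eh_Eg_E$ with relative elements $h_E$ and $\ell_E$, the variance identity for an average of unitaries with weight $a_0^3a_1$, Lemma~\ref{lem:relativegap} with $D=k+1$, and the bound $a_0^3a_1\geq\gamma_k$. The only difference is cosmetic: the paper works with labelled events rather than merging atoms of $\zeta$, and the variance identity holds for repeated unitaries anyway, so your remark about coinciding group elements is not needed.
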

\begin{proof}
Conditional on $K=1$, let $F_1,\ldots,F_{k-1}$ be the finite side
branches. All probabilities in this proof use their conditional product
distribution. Define the events $\mathcal D_0,\mathcal D_1$ below; a dotted union
denotes a union of pairwise disjoint events:
\[
 \begin{aligned}
 \mathcal D_0&=\bigcap_{r=1}^{k-1}\{F_r=\varnothing\},\text{ and }
 \mathcal D_1&=\mathop{\dot\bigcup}_{r=1}^{k-1}
       \left(\{|F_r|=1\}\cap\bigcap_{\ell\ne r}\{F_\ell=\varnothing\}\right).
 \end{aligned}
\]
The product distribution in Lemma~\ref{lem:conditioning} gives
\[
 \begin{aligned}
 a:=\PP(\mathcal D_0)
 &=\prod_{r=1}^{k-1}\frac{1-p}{\eta}
 =\frac{(1-p)^{k-1}}{\eta^{k-1}},\\
 b:=\PP(\mathcal D_1)
 &=\sum_{r=1}^{k-1}\frac{p(1-p)^k}{\eta}
                      \prod_{\ell\ne r}\frac{1-p}{\eta}=(k-1)\frac{p(1-p)^k}{\eta}
       \left(\frac{1-p}{\eta}\right)^{k-2}
 =\frac{(k-1)p(1-p)^{2k-2}}{\eta^{k-1}}.
 \end{aligned}
\]
For the one-vertex graph $\bullet$,
\[
 A_\bullet=[0],\qquad m_\varnothing(E)=0,\qquad m_\bullet(E)=-E^{-1}.
\]
Therefore,
\[
 \begin{aligned}
 \mathcal D_0:&\quad D_1=0,\qquad M_0(w)=-\frac1{E+w}=g_E(w),\\
 \mathcal D_1:&\quad D_1=-E^{-1},\qquad
 M_1(w)=-\frac1{E+w-E^{-1}}=g_E(h_E(w)).
 \end{aligned}
\]
The matrices for these maps are given in \eqref{eq:commonmatrices}.
For two independent choices of finite side branches, the three disjoint labelled
events and their maps are
\[
 \begin{array}{c|c|c}
 \text{event}&\text{probability}&\text{map}\\ \hline
 (\mathcal D_0,\mathcal D_0)&a^2&G_0=M_0M_0=g_E^2\\
 (\mathcal D_0,\mathcal D_1)&ab&G_1=M_0M_1=g_E^2h_E\\
 (\mathcal D_1,\mathcal D_0)&ab&G_2=M_1M_0=g_Eh_Eg_E
 \end{array}
\]
Composing with the inverse of the first map gives
\[
 G_0^{-1}G_1=g_E^{-2}g_E^2h_E=h_E,\qquad
 G_0^{-1}G_2=g_E^{-2}g_Eh_Eg_E=g_E^{-1}h_Eg_E=\ell_E.
\]
The two-step average is
\[
 P^2f=\int\!\int T_gT_hf\,d\zeta(g)\,d\zeta(h)
     =\int\!\int T_{gh}f\,d\zeta(g)\,d\zeta(h).
\]
Let $Z=T_gT_h$ for two independent maps with distribution $\zeta$,
and let $Z'$ be an independent copy. Then $\E Zf=P^2f$, and the
Hilbert-space variance identity gives
\begin{equation}\label{eq:variance-expanded}
 \|f\|_2^2-\|P^2f\|_2^2=\frac12\E\|Zf-Z'f\|_2^2.
\end{equation}
Keep only the pairs of labelled events producing $(G_0,G_1)$ and
$(G_0,G_2)$ and their reverses. Combining each pair with its reverse
gives the coefficient $a^2\cdot ab=a^3b$.
Unitarity of $T_{G_0}$ and Lemma~\ref{lem:relativegap}, with $D=k+1$,
therefore give
\begin{equation}\label{eq:variancegap}
 \begin{aligned}
 \norm f_2^2-\norm{P^2f}_2^2
 &\geq a^3b\bigl(\|T_{h_E}f-f\|_2^2
                    +\|T_{\ell_E}f-f\|_2^2\bigr)\geq a^3b\frac{4-2\sqrt3}{N_k^2}\norm f_2^2.
 \end{aligned}
\end{equation}
Other events may produce the same maps; only the three labelled
events are needed for this lower bound.
Finally, $\eta\leq1$ and $1/k<p\leq3/(2k)$ imply
\begin{equation}\label{eq:gammak}
 a^3b=\frac{(k-1)p(1-p)^{5(k-1)}}{\eta^{4(k-1)}}
 \geq\frac{k-1}{k}\left(1-\frac{3}{2k}\right)^{5(k-1)}=\gamma_k.
\end{equation}
For $k\geq2$,
\[
 0<1-\frac3{2k}<1,\qquad 0<\gamma_k<1,\qquad
 0<\frac{4-2\sqrt3}{N_k^2}<1.
\]
Thus $0<\rho_k<1$. Equations \eqref{eq:variancegap}--\eqref{eq:gammak} give
\[
 \|P^2f\|_2^2
 \leq\left(1-a^3b\frac{4-2\sqrt3}{N_k^2}\right)\|f\|_2^2
 \leq\rho_k^2\|f\|_2^2.
\]
Taking square roots proves \eqref{eq:Pcontraction}.
\end{proof}

\begin{proof}[Proof of Theorem~\ref{thm:tree}]
With $B_k$ from Proposition~\ref{prop:convolution} and $\rho_k$ from \eqref{eq:contractionconstants}, we choose
\begin{equation}\label{eq:epsilon}
 \varepsilon_k=\min\left\{\frac1{2k},\frac{1-\rho_k}{4kB_k}\right\}>0.
\end{equation}
We fix $1/k<p\leq1/k+\varepsilon_k$ and $E\in\mathcal E_p$. Then
\[
 \frac1k<p\leq\frac1k+\varepsilon_k
 \leq\frac1k+\frac1{2k}=\frac3{2k} \text{ and also }
 0<kp-1\leq k\varepsilon_k
 \leq\frac{1-\rho_k}{4B_k}.
\]
Proposition~\ref{prop:contraction} applies. If the second case of
Lemma~\ref{lem:zeroone} held, its probability law $\nu$ on $\Hh$ would
satisfy \eqref{eq:fixedpoint}. The operators in \eqref{eq:PQ} satisfy
the hypotheses of Proposition~\ref{prop:fixedpoint-exclusion} with
$\rho=\rho_k$ and $B=B_k$, by \eqref{eq:PonL2}, \eqref{eq:Qbound} and
\eqref{eq:Pcontraction}. Moreover, \eqref{eq:qbound2} and
\eqref{eq:epsilon} give
\[
 q\leq kp-1\leq k\varepsilon_k\leq\frac{1-\rho_k}{4B_k},\qquad
 \rho_k+2qB_k\leq\rho_k+\frac{1-\rho_k}{2}
 =\frac{1+\rho_k}{2}<1.
\]
The criterion therefore excludes such a law $\nu$.
The resolvent cannot have positive imaginary part almost surely. Lemma~\ref{lem:zeroone}
therefore gives
\[
 \PP_p\bigl(\Gamma_T(E+i0)\in\R\mid |T|=\infty\bigr)=1
 \qquad(E\in\mathcal E_p).
\]
For every oriented edge $y\to x$, the rooted forward trees have the same distribution:
\[
 \mathcal L(T_{y\to x})=\mathcal L(T),\qquad
 \PP_p(|T_{y\to x}|=\infty)=s.
\]
On the finite-cluster event, \eqref{eq:goodenergies} gives
\[
 \Gamma_{y\to x}(E+i0)=m_{T_{y\to x}}(E)\in\R.
\]
On the infinite-cluster event, the conditional resolvent value is real
 almost surely, as just proved. Splitting into these disjoint events gives
\[
 \begin{aligned}
 \PP_p(\Imn\Gamma_{y\to x}(E+i0)\ne0)
 &=(1-s)\PP_p(\Imn\Gamma_{y\to x}(E+i0)\ne0\mid |T_{y\to x}|<\infty)\\
 &\quad+s\PP_p(\Imn\Gamma_{y\to x}(E+i0)\ne0\mid |T_{y\to x}|=\infty)
 =0.
 \end{aligned}
\]
By countability of the oriented edges,
\begin{equation}\label{eq:realcavities}
 \begin{gathered}
 \PP_p(\exists\,y\to x:\Imn\Gamma_{y\to x}(E+i0)\ne0)
 \leq\sum_{y\to x}\PP_p(\Imn\Gamma_{y\to x}(E+i0)\ne0)=0,\\
 \PP_p\!\left(\Gamma_{y\to x}(E+i0)\in\R
                    \text{ for every }y\to x\right)=1.
 \end{gathered}
\end{equation}
For $x\in V$, let $C_x$ be its open cluster, rooted at $x$.
Because $A_\omega$ is the orthogonal sum of the cluster adjacency
operators, $G_x(z)=\Gamma_{C_x}(z)$. The root restriction and extension
maps are $R_{+,C_x}u=u(x)$ and $R_{-,C_x}c=c\delta_x$.
Apply the Grushin problem \eqref{eq:grushin-root} with $X=C_x$.
Removing $x$ from $C_x$ leaves exactly the branches $T_{y\to x}$
for which $y\sim x$ and $\omega_{xy}=1$. Thus
\eqref{eq:grushin-effective} gives
\[
 E_{-+,C_x}(z)
 =z+\sum_{\substack{y\sim x\\\omega_{xy}=1}}\Gamma_{y\to x}(z)
 =z+\sum_{y\sim x}\omega_{xy}\Gamma_{y\to x}(z).
\]
The Grushin recovery formula \eqref{eq:grushin-recovery} gives
\begin{equation}\label{eq:fullresolvent}
 \begin{aligned}
 G_x(z)&=-E_{-+,C_x}(z)^{-1}
       =-\left(z+\sum_{y\sim x}\omega_{xy}\Gamma_{y\to x}(z)\right)^{-1},\\
 G_x(z)E_{-+,C_x}(z)&=-1.
 \end{aligned}
\end{equation}
For fixed $E\in\mathcal E_p$, all relevant scalar boundary values are
finite almost surely. Define
\[
 d_x(E)=E_{-+,C_x}(E+i0)
       =E+\sum_{y\sim x}\omega_{xy}\Gamma_{y\to x}(E+i0).
\]
By \eqref{eq:realcavities}, $d_x(E)\in\R$. Taking scalar limits in the
product identity of \eqref{eq:fullresolvent}, before dividing, gives
\[
 G_x(E+i0)d_x(E)=-1,\qquad
 d_x(E)\ne0,\qquad G_x(E+i0)=-d_x(E)^{-1}\in\R.
\]
Let $\mathcal N(\omega)\subset I_k$ be the set of energies where at least
one $G_x$ fails to have a finite real boundary value. The joint measurability
established in Section~\ref{sec:boundaryfacts} and countability of $V$
allow us to apply Tonelli. For each $E\in\mathcal E_p$, the preceding
argument gives $\PP_p(E\in\mathcal N)=0$; since
$|I_k\setminus\mathcal E_p|=0$, we obtain
\[
 \int_{\Omega_G}|\mathcal N(\omega)|\,d\PP_p(\omega)
 =\int_{I_k}\PP_p(E\in\mathcal N)\,dE=0.
\]
Thus, almost surely, every $G_x$ has finite real boundary values for
almost every energy in $I_k$. Outside $I_k$ the resolvents are real by
\eqref{eq:energyinterval}. The spectral criterion
\eqref{eq:acdensitycriterion}, applied to the coordinate basis
$(\delta_x)_{x\in V}$, gives $\mathcal H_\ac(A_\omega)=\{0\}$.
Each cluster space is a reducing subspace of $A_\omega$, so the same
conclusion holds simultaneously for all open clusters.
\end{proof}

\begin{proof}[Proof of Corollary~\ref{cor:threshold}]
Theorem~\ref{thm:tree} and \eqref{eq:onset} give
$p_\ac(k)\geq1/k+\varepsilon_k>p_c=1/k$.
For the upper bound, Bordenave's Theorem~3 and its application to
bond percolation in \cite[Section~1.2]{Bordenave} give
$p_{\rm B}(k)\in(1/k,1)$ such that $A_{C_o}$ has a nontrivial
absolutely continuous part with positive probability whenever
$p_{\rm B}(k)<p<1$. This event implies $|C_o|=\infty$, since finite
trees have only eigenvalues. It therefore also has positive probability
conditional on $|C_o|=\infty$. Taking $p=(1+p_{\rm B}(k))/2$ in
\eqref{eq:onset} gives $p_\ac(k)\leq(1+p_{\rm B}(k))/2<1$.
\end{proof}

\section*{Statement on the use of artificial intelligence}

The authors used ChatGPT (OpenAI) during the preparation of this manuscript
as a research and writing aid. In particular, ChatGPT pointed the authors to
the two-step $P^2$-contraction mechanism used in
Proposition~\ref{prop:contraction}: passing from the one-step support to
two-step products produces two relative parabolic transformations with
distinct fixed points, allowing the contraction to be quantified through a
free-group spectral gap. ChatGPT suggested this specific mechanism and
assisted in deriving the resulting quantitative proposition from the
classical ideas of Shubin--Vakilian--Wolff and Wolff together with Kesten's
spectral-gap estimate. The proposition in its present form does not appear
verbatim in those references.

More generally, ChatGPT was used throughout the preparation of the
manuscript for editorial assistance, including reorganizing arguments,
improving exposition,
checking that notation was introduced before use, and identifying relevant literature. The authors
checked the mathematical arguments, citations, and final text and takes full
responsibility for the contents of the manuscript.

\end{document}